\theoremstyle{plain}
\newtheorem{theorem}{Theorem}
\newtheorem{lemma}[theorem]{Lemma}
\newtheorem{proposition}[theorem]{Proposition}
\newtheorem{corollary}[theorem]{Corollary}
\newtheorem{definition}[theorem]{Definition}
\newtheorem{fact}[theorem]{Fact}
\newtheorem{notation}[theorem]{Notation}
\newtheoremstyle{note}{\topsep}{\topsep}{\slshape}{}{\scshape}{}{ }{}
\theoremstyle{note}
\newtheorem{remark}[theorem]{Remark}
\newcommand\Span{\operatorname{span}}
\newcommand\ind{\operatorname{ind}}
\newcommand{\mbV}{\mathbb{V}}
\newcommand{\mbI}{\text{\noindent\(\mathds{1}\)}}
\newcommand\field{\mathbb}
\newcommand\C{\field{C}}
\newcommand\rank{\operatorname{rank}}
\newcommand\tr{\operatorname{Tr}}
\newcommand\id{\operatorname{\mathrm{Id}}}
\newcommand{\opV}{\operatorname{V}}
\newcommand{\<}{\langle}
\renewcommand{\>}{\rangle}
\newcommand\be{\begin{equation}}
\newcommand\ee{\end{equation}}
\newcommand\bea{\begin{array}}
\newcommand\eea{\end{array}}
\newcommand\ben{\begin{eqnarray}}
\newcommand\een{\end{eqnarray}}
\newcommand\ot{\otimes}
\newcommand\bei{\begin{itemize}}
\newcommand\eei{\end{itemize}}
\newcommand\bee{\begin{enumerate}}
\newcommand\eee{\end{enumerate}}
\begin{document}
\title{Group representation approach to $1 \rightarrow N$ universal quantum cloning machines}

\author{Micha{\l} Studzi\'nski$^{1,2}$, Piotr \'Cwikli\'nski$^{1,2}$, Micha{\l} Horodecki$^{1,2}$ and Marek Mozrzymas$^3$}
\affiliation{
$^1$ Institute of Theoretical Physics and Astrophysics, University of Gda\'nsk, 80-952 Gda\'nsk, Poland \\
$^2$ National Quantum Information Centre of Gda\'nsk, 81-824 Sopot, Poland \\
$^3$ Institute for Theoretical Physics, University of Wroc{\l}aw, 50-204 Wroc{\l}aw, Poland}

\date{\today}

\begin{abstract}
In this work, we revisit the problem of finding an admissible region of fidelities obtained after an application of an arbitrary $1 \rightarrow  N$ universal quantum cloner which has been recently solved in [A. Kay et al., Quant. Inf. Comput 13, 880 (2013)] from the side of cloning machines. Using group-theory formalism, we show that the allowed region for fidelities can be alternatively expressed in terms of overlaps of pure states with recently found irreducible representations of the commutant $U \ot U \ot \ldots \ot U \ot U^*$, which gives the characterization of the allowed region where states being cloned are figure of merit. Additionally, it is sufficient to take pure states with real coefficients only, which makes calculations simpler. To obtain the allowed region, we make a convex hull of possible ranges of fidelities related to a given irrep. Subsequently, two cases: $1 \rightarrow 2$ and $1 \rightarrow 3$ cloners, are studied for different dimensions of states as illustrative examples.
\end{abstract}

\pacs{03.67.-a, 03.65.Fd, 03.67.Hk}
\keywords{quantum cloning; asymmetric quantum cloning; qudit; representation theory}

\maketitle

\section{Introduction}
\label{sec:introduction}
A basic feature of entanglement is that contrary to classical correlations, it is monogamous. For example, if there is maximal entanglement between two parties, then no other party can be entangled with those parties. More generally, if $A$ is entangled with $B$ and $C$, then the entanglement must be considerably weaker. This phenomenon gives rise to the fact that quantum information cannot be copied, in contrast with information from the 'classical world'. In other words, one is not able to copy perfectly an arbitrary quantum state. In terms of monogamy, if one wants to prepare some number of copies of the initially unknown quantum state, fidelities of cloning cannot be all equal to 1, there is a trade-off. This basic feature is known as 'no-cloning theorem' and was recognized by Wootters and \.Zurek \cite{WoottersZurek}, and Dieks \cite{Dieks}.

On the other hand, copying is possible, but the quality of the copy can be very bad sometimes. That's why, the goal of finding the ultimate bounds for the quality of copying is an important task. A big effort has been made to solve it, starting from the work of Hillery and Bu\u{z}ek \cite{BuzekHillery}. In general, the subject was studied intensively, both for symmetric (all fidelities are equal) Universal Quantum Cloning Machines ($UQCM$) \cite{Bruss-cloning1998, Massar, Werner-cloning1998, KW, Wang-cloning2011}, and asymmetric (unequal fidelities) $UQCM$ \cite{Braunstein-cloning2001, Cerf-cloning2000, Fiurasek-cloning2005, Iblisdir-cloning2004, Iblisdir-cloning2005, Wang-cloning2011, Kay2009-cloning, Kaszlikowski2012-cloning, Yu2010-cloning}. See also \cite{Gisin, Fan2013-cloning} for reviews.
Nevertheless, for a long time there was a 'gap' in studies of quantum cloning - there was no general results on an admissible region of fidelities for universal asymmetric $1 \rightarrow N$ quantum cloning machines. The problem has been solved just recently in a series of papers \cite{Kay2009-cloning, Kaszlikowski2012-cloning} from the point of cloning machines. In \cite{Cwiklinski2012-cloning} the problem, for qubits, has been revisited using a group representation approach, namely Schur - Weyl duality, where the authors characterized the problem from the side of a cloned state and obtain that regions for fidelities can be obtained from plain and basic calculations of overlaps of pure quantum states with irreps of a symmetric group $S(n)$.

In this Letter, we shall consider a $1 \rightarrow N$ quantum cloning machine for qudits. Our task it to obtain an admissible region of fidelities after an application of that $UQCM$. In \cite{Cwiklinski2012-cloning}, it has been shown that it is possible to solve the problem for qubits using Schur-Weyl duality. Unfortunately it works only for that dimension of states and there is no way to extend it to higher dimensions by the usage of that dualism. Motivated by this, we turn our attention to, recently developed systematic method - decomposition of partially transposed permutation operators into its irreducible components~\cite{Studzinski2013,mozrzymas2013}, which allows to omit severe restrictions for the dimensions of states that has appeared previously. However, some modifications are necessary first, so the method suits our problem of cloning machines. We want to stress that to our best knowledge, it is also the first systematic application of that algebra in physics, and particulary - quantum information (see, \cite{EggelingWerner, Eggeling}, for the examples of some limited applications).

This work is organized as follows. In Section \ref{sec:problem}, we formulate our main problem: which values of fidelities are allowed after applying a $1 \rightarrow N$ quantum cloning machine for qudits. First, we reformulate the cloning problem in term of entanglement sharing and recall that a cloning fidelity can be connected with a singlet fraction value. Then, we point out that the strategy used in \cite{Cwiklinski2012-cloning} to solve a $1 \rightarrow N$ $UQCM$ for qubits is insufficient when one deals with higher dimensions of states $d$, ($d>2$), since using Schur-Weyl duality, one is not able to find a maximally entangled state that is invariant under $U \ot U$ transformations, the only thing that is known is the invariance under $U^{*} \ot U$ ones. That's why, the commutant structure of $U^{*} \ot U \ot \ldots \ot U$ is needed instead of that known from Schur-Weyl duality: $U \ot U \ot \ldots \ot U$. In Section \ref{sec:tools}, mathematical tools from \cite{Studzinski2013} that are necessary to solve the problem are very briefly mentioned, namely, examples of irreducible representations that are needed in our case study problems: $1 \rightarrow 2$ and $1 \rightarrow 3$ $UQCM$. Then, we proceed in Section \ref{sec:method} with showing how to connect method of calculations of the admissible region of fidelities from \cite{Cwiklinski2012-cloning} with mathematical tools from the previous section. It allows us to present in Section \ref{sec:main} the regions (focusing mainly on our examples $1 \rightarrow 2$ and $1 \rightarrow 3$ machines) that are allowed in the problem of $1 \rightarrow N$ cloning. Up to our best knowledge, it is the first graphical presentation of allowed regions for $d>2$. At the end, we compare our results in Section \ref{sec:com} with those obtained in \cite{Werner-cloning1998}, where results for symmetric cloning has been presented and from \cite{Kay2009-cloning, Kaszlikowski2012-cloning}, where the same problem as ours have been solved, but cloning machines were figures of merit. We obtain matching of results in both cases.

\section{Formulation and solution to the problem}
\label{sec:problem}
\subsection{Background of the problem}
Suppose that one has a universal cloning machine that produces clones with cloning fidelities $f_{1k}$, where $k \in {2, 3, \ldots, n}$ and the general, admissible region of fidelities is the figure of merit. The question that one can ask is the following:\\
\textit{Which values of cloning fidelities
$\left( f_{12}, f_{13}, \ldots, f_{1n} \right)$ are allowed for a (qudit) universal cloning machine?}

But since quantum cloning can be recast in a picture where one wants to share entanglement between some number of parties (see, for example, \cite{Cwiklinski2012-cloning, MHPH}). Therefor, we can equivalently state our problems in this formalism, where one evaluates singlet fractions $F_{1i}$ between the initial state and one of the copies. This allows to restate our question as:\\
\textit{Which values of n-tuples of singlet fractions
$\left( F_{12}, F_{13}, \ldots, F_{1n} \right)$ are allowed for an arbitrary state
of a maximally mixed first subsystem?}

\textbf{Remark:} Since these two quantities, cloning fidelities and singlet fractions, are connected \cite{MHPH}, in the next section we will adapt the term "fidelities" for the latter.

Let us now consider in more details the relation between cloning fidelities $f$ and the fidelities (singlet fractions) $F$.

Suppose that we are given with the maximally entangled qudit state
\be
| \psi^{+} \> = \frac{1}{\sqrt{d}} \sum_{i=1}^{d} |ii\>,
\ee
and we apply the $1\to N$ cloning machine~\footnote{described by a completely positive, trace preserving map $\widetilde{\Lambda}$} $\mathcal{CM}$ to the second subsystem of the $|\psi^+\>$, when the first is untouched. As a result we obtain $N+1$-partite mixed state that possesses all information about the cloning map $\widetilde{\Lambda}$. The state is of the form

\be
\rho_{1\ldots n} = \left( \operatorname{\mbI} \ot \widetilde{\Lambda} \right)\left( | \psi^{+} \> \< \psi^{+} | \right), \label{map}
\ee
where $n=N+1$, so that the index $i=1$ is related to an initial state, and $i=2,\ldots N+1$ are related to
clones. The fidelities of clones are strictly related to fidelities of reduced states $\rho_{1k}$ with maximally entangled state \cite{MHPH}:
\be
f_i= \frac{F_id+1}{d+1}.
\ee
Here $f_i=\<\psi_{in}|\rho_{out}^i|\psi_{in}\>$ is fidelity of $i$-th clone where $\<\ldots\>$
is the uniform average over an input state $\psi_{in}$, and $F_i=\<\psi_+|\rho_{1,i} |\psi_+\>$.

An allowed region for quantum cloning, can be calculated then by evaluating singlet fractions $F_{1i}$ between the initial state and one of the copies, denoted by
% \ben
% &&F_{12} = \< \psi_{12}^{+} | \tr_{34} (\rho_{1234}) | \psi_{12}^{+} \>, \nonumber \\
% &&F_{13} = \< \psi_{13}^{+} | \tr_{24} (\rho_{1234}) | \psi_{13}^{+} \>, \nonumber \\
% &&F_{14} = \< \psi_{14}^{+} | \tr_{23} (\rho_{1234}) | \psi_{14}^{+} \>.
% \een
% Which can be also written as:
\be
F_{1i}=\< \psi^{+}_{1i} | \tr_{\overline{1i}}(\rho_{1\ldots n}) | \psi^{+}_{1i} \> \ \text{or} \  F_{1i} = \< \psi^{-}_{1i} | \tr_{\overline{1i}}(\widetilde{\rho}_{1\ldots n}) | \psi^{-}_{1i} \>
\label{note},
\ee
where $1< i\leq n$, $\tr_{\overline{1i}}$ means partial trace over all systems except $1i$,
and $|\psi^{-}_{1i}\>$ and $\widetilde{\rho}_{1\ldots n}$ are defined below.

Let us show here, why we have been able to use Schur-Weyl duality and commutant structure of $U^{\ot n}$ for qubits cloning machines \cite{Cwiklinski2012-cloning} and explain why it does not work for higher dimensions of states ($d>2$). For qudits, in principle, the vector $|\psi^{-}_{1i}\> = U \ot \operatorname{\mbI} | \psi^{+}_{1 \widetilde{1}} \>$, $| \psi^{-} \>$ needs to be obtain after an application of $U$. For qubits, one can use Bell states $|\psi^{+}\> = \frac{1}{\sqrt{2}}(|00\>+|11\>)$ and $|\psi^{-}\> = \frac{1}{\sqrt{2}}(|01\>-|10\>)$ and show that the vector $|\psi^{-}\>$ is obtained after the action of the Pauli matrix $-i\sigma_{y}$ on $|\psi^{+}\>$)
Using that we can write
\be |\psi^{-}_{1 \widetilde{1}} \> = U \ot \operatorname{\mbI} |\psi^{+}_{1\widetilde{1}} \>, \label{EqU} \ee
where $U = -i \sigma_y$.
The state $\widetilde{\rho}_{1234}$ from equation.~\eqref{note} is obtained after the following transformation:
\be
\begin{split}
\label{map1}
\widetilde{\rho}_{1\ldots n} &= ( \operatorname{\mbI} \ot \widetilde{\Lambda} )|\psi^{-}_{1\widetilde{1}} \>\<\psi^{-}_{1\widetilde{1}}| \\ &=
( U \ot \operatorname{\mbI} )  \left(( \operatorname{\mbI} \ot \widetilde{\Lambda} )|\psi^{+}_{1\widetilde{1}}\>   \< \psi^{+}_{1\widetilde{1}} | \right) (U \ot \operatorname{\mbI})^{\dag}.
\end{split}
\ee
The $n-$partites states $\widetilde{\rho}_{1\ldots n}$, with the constraint $\widetilde{\rho}_1=\mbI/2$,
are in one-to-one correspondence with cloning machines.

However, now the problem is formulated in terms of singlet fractions with states $|\psi^-\>$ rather than $|\psi^+\>$. The former states are invariant under $U\ot U$
transformation for any $U$. Therefore to obtain the region of fidelities with $|\psi^-\>$ states
it is enough to consider states $\rho_{1\ldots n}$ that are invariant under $U^{\ot n}$
transformations. There exists well known formalism that allows to deal with states possessing such symmetry, called Schur-Weyl duality that combines representation theory for unitary group with that of  group of permutations. We have successfully applied this formalism in \cite{Cwiklinski2012-cloning}. However in dimensions $d>2$ there is no maximally entangled state, that would be $U\ot U$-invariant. Therefore, the Schur-Weyl formalism cannot be used.

Instead, it is known, that the state $|\psi^+\>$ is $U^*\ot U$ invariant \cite{Horodecki_2001}, hence we should consider $U^*\ot U^{\ot n-1}$ invariant states. The formalism, related to this kind of symmetry is not so well developed as the previous one, and there are quite basic differences between the two.
In particular, while the representation of  $U^{\ot n}$ is dual to representation of another group - the symmetric group,
it is not the case for $U^*\ot U^{\ot n-1}$ which is dual to representation of an algebra, that does not satisfy
group axioms - an instance of so called Brauer algebra. While some general results concerning this type of algebras
have been known in literature (see, for example, \cite{EggelingWerner, Eggeling, Werner2006}), it has not been described in depth, in contrast to Schur-Weyl theory.
In particular, the explicit form of matrix elements of representations of the algebra, have been provided only recently
in \cite{Studzinski2013,mozrzymas2013}. In the following we solve the cloning problem applying these new tools.

\subsection{Mathematical tools}
\label{sec:tools}
As it was said before, to solve our problem, the knowledge of irreducible representations of a $U^* \ot U \ot \ldots \ot U$ case is necessary. In a recent papers \cite{Studzinski2013,mozrzymas2013} this problem has been addressed, so we can use the formalism presented there~\footnote{See also Appendix~\ref{remind} for a short review on this topic.}.

In the articles, the authors presented irreducible representations of partially transposed permutation operators $\opV^{t_n}(\sigma)$, where $\sigma\in S(n)$ and $t_n$ denotes partial transposition over the last subsystem. In our approach, we need similar results for irreps when partial transposition is taken over the first subsystem, i.e. we need irreps of $\opV^{t_1}(1k)$, where $1\leq k \leq n$ for $U^* \ot U \ot \ldots \ot U$ instead of $U \ot \ldots \ot U \ot U^*$. That's why, first, some work needs to be done to adapt the results, so they suit our problem. One can see that to obtain correct results, we have to take irreps for permutations in the form $(in)$, where $1\leq i \leq n-1$, i.e. we have the following mapping
\be
\label{mapping}
(12) \mapsto (1n), \ (13) \mapsto (2n) \ ,\ldots, \ (1n) \mapsto (n-1n).
\ee
In the next sections, for the simplicity, we introduce the notation that $t_n \equiv \ '$.
Now we are ready to present all irreps that are essential for our paper (case study examples). Of course our method works efficiently for an arbitrary number of particles $n$ and dimensions of Hilbert space $d$, but here we present them only for $n=3,4$, because for these cases we are able to represent our results graphically.
\begin{itemize}
\item Case when $n=3$. In this case in algebra $\mathcal{M}$ we have only one irrep labeled by trivial partition $\alpha=(1)$.
\be
\mbV'_{\alpha}(13)=\frac{1}{2}\left(
\begin{array}{cc}
d+1 & -\sqrt{d^{2}-1} \\
-\sqrt{d^{2}-1} & d-1%
\end{array}%
\right) ,~\mbV'_{\alpha}(23)=\frac{1}{2}\left(
\begin{array}{cc}
d+1 & \sqrt{d^{2}-1} \\
\sqrt{d^{2}-1} & d-1%
\end{array}%
\right)
\ee
% \be
% \label{exIrrepsn3}
% \widetilde{\mbV}'_{\lambda}(13)=\frac{1}{2}\begin{bmatrix}d+\sqrt{d^2-1} & 1 \\1 & d-\sqrt{d^2-1} \end{bmatrix}, \quad \widetilde{\mbV}'_{\lambda}(23)=\frac{1}{2}\begin{bmatrix} d-\sqrt{d^2-1} & 1\\ 1 & d+\sqrt{d^2-1} \end{bmatrix}.
% \ee
\item Case when $n=4$. In this case in algebra $\mathcal{M}$ we have two irreps labeled by partitions $\alpha_1=(2)$ and $\alpha_2=(1,1)$. For partition $\alpha_1$ we deal with matrices 3x3 for any $d\geq 1$:
\be
\label{alpha1}
\begin{split}
\mbV'_{\alpha_1}(14)&=\frac{1}{3}D^{\alpha_1}\left(
\begin{array}{ccc}
\frac{1}{6} & \frac{-1}{2\sqrt{3}} & \frac{1}{3\sqrt{2}} \\
\frac{-1}{2\sqrt{3}} & \frac{1}{2} & \frac{-1}{\sqrt{6}} \\
\frac{1}{3\sqrt{2}} & \frac{-1}{\sqrt{6}} & \frac{1}{3}
\end{array}%
\right) D^{\alpha_1},\quad \mbV'_{\alpha_1}(24)=\frac{1}{3}D^{\alpha_1}\left(
\begin{array}{ccc}
\frac{1}{6} & \frac{1}{2\sqrt{3}} & \frac{1}{3\sqrt{2}} \\
\frac{1}{2\sqrt{3}} & \frac{1}{2} & \frac{1}{\sqrt{6}} \\
\frac{1}{3\sqrt{2}} & \frac{1}{\sqrt{6}} & \frac{1}{3}%
\end{array}%
\right) D^{\alpha_1},\\
\mbV'_{\alpha_1}(34)&=\frac{1}{3}D^{\alpha_1}\left(
\begin{array}{ccc}
\frac{2}{3} & 0 & \frac{-2}{3\sqrt{2}} \\
0 & 0 & 0 \\
\frac{-2}{3\sqrt{2}} & 0 & \frac{1}{3}%
\end{array}%
\right) D^{\alpha_1},
\end{split}
\ee
where
\be
D^{\alpha_1}=\left(
\begin{array}{ccc}
\sqrt{d-1} & 0 & 0 \\
0 & \sqrt{d-1} & 0 \\
0 & 0 & \sqrt{d+2}%
\end{array}%
\right)
\ee
and $\varepsilon ^{2}=1$. For partition $\alpha_2$ situation is more complicated. Dimension of irrep $\alpha_2$ depends on dimension of local Hilbert space $d$. Namely for any $d \geq 3$ we have
\be
\begin{split}
\mbV'_{\alpha_2}(14)&=\frac{1}{3}D^{\alpha_2}\left(
\begin{array}{ccc}
\frac{1}{2} & \frac{-1}{2\sqrt{3}} & \frac{-1}{\sqrt{6}} \\
\frac{-1}{2\sqrt{3}} & \frac{1}{6} & \frac{1}{3\sqrt{2}} \\
\frac{-1}{\sqrt{6}} & \frac{1}{3\sqrt{2}} & \frac{1}{3}%
\end{array}%
\right) D^{\alpha_2},\quad \mbV'_{\alpha_2}(24))=\frac{1}{3}D^{\alpha_2}\left(
\begin{array}{ccc}
\frac{1}{2} & \frac{1}{2\sqrt{3}} & \frac{1}{\sqrt{6}} \\
\frac{1}{2\sqrt{3}} & \frac{1}{6} & \frac{1}{3\sqrt{2}} \\
\frac{1}{\sqrt{6}} & \frac{1}{3\sqrt{2}} & \frac{1}{3}%
\end{array}%
\right) D^{\alpha_2},\\
\mbV'_{\alpha_2}(34)&=\frac{1}{3}D^{\alpha_2}\left(
\begin{array}{ccc}
0 & 0 & 0 \\
0 & \frac{2}{3} & \frac{-\sqrt{2}}{3} \\
0 & \frac{-\sqrt{2}}{3} & \frac{1}{3}%
\end{array}%
\right)D^{\alpha_2},
\end{split}
\ee
where
\be
D^{\alpha_2}=\left(
\begin{array}{ccc}
\sqrt{d+1} & 0 & 0 \\
0 & \sqrt{d+1} & 0 \\
0 & 0 & \sqrt{d-2}%
\end{array}%
\right).
\ee
% \be
% \label{lambda1}
% \begin{split}
% \widetilde{\mbV}'_{\lambda_1}(14)&=\frac{1}{9}\begin{pmatrix}-2+5d+4\Delta_1 & 4-d+\Delta_1 & 4-d+\Delta_1 \\ 4-d+\Delta_1 & 1+2d-2\Delta_1 & 1+2d-2\Delta_1 \\ 4-d+\Delta_1 & 2+2d-2\Delta_1 & 1+2d-2\Delta_1 \end{pmatrix},\\
%  \widetilde{\mbV}'_{\lambda_1}(24)&=\frac{1}{9}\begin{pmatrix}1+2d-2\Delta_1 & 4-d+\Delta_1 & 1+2d-2\Delta_1 \\ 4-d+\Delta_1 & 5d-2+4\Delta_1 & 4-d+\Delta_1 \\ 1+2d-2\Delta_1 & 4-d+\Delta_1 & 2d+1-2\Delta_1 \end{pmatrix},\\
%  \widetilde{\mbV}'_{\lambda_1}(34)&=\frac{1}{9}\begin{pmatrix}1+2d-2\Delta_1 & 1+2d-2\Delta_1 & 4-d+\Delta_1 \\ 1+2d-2\Delta _1& 1+2d-2\Delta_1 & 4-d+\Delta_1 \\ 4-d+\Delta_1 & 4-d+\Delta_1 & -2+5d+4\Delta_1 \end{pmatrix},
% \end{split}
% \ee
% where $\Delta_1=\sqrt{(d-1)(d+2)}$.
For every $d<3$ (in our case only $d=2$ is interesting) we deal with matrices 2x2:
\be
\begin{split}
\mbV'_{\alpha_2}(14)=3\left(
\begin{array}{cc}
\frac{1}{2} & \frac{-1}{2\sqrt{3}} \\
\frac{-1}{2\sqrt{3}} & \frac{1}{6}%
\end{array}%
\right) ,\quad \mbV'_{\alpha_2}(24)=3\left(
\begin{array}{cc}
\frac{1}{2} & \frac{1}{2\sqrt{3}} \\
\frac{1}{2\sqrt{3}} & \frac{1}{6}%
\end{array}%
\right) ,\quad \mbV'_{\alpha_2}(34)=3\left(
\begin{array}{cc}
0 & 0 \\
0 & \frac{2}{3}%
\end{array}%
\right).
\end{split}
\ee
The full knowledge about irreps of  $\opV'(\sigma_{ab})$, where $\sigma_{ab}\in S(n)$ (see Notation~\ref{not9} in the Appendix~\ref{remind}) allows us to decompose these operators and density operators $\rho_{1\ldots n}$ which are $U^* \ot U \ot \ldots U $ invariant into block diagonal form
\be
\label{decomp1}
\opV'(\sigma_{ab})=\bigoplus_{\alpha} \mbI_{r(\alpha)} \ot \mbV'_{\alpha}(\sigma_{ab}), \quad
\rho_{1\ldots n}=\bigoplus_{\alpha} \mbI_{r(\alpha)} \ot \widetilde{\rho}^{\alpha},
\ee
where the direct sum runs over all inequivalent irreps $\alpha$, $r(\alpha)$ denotes the dimension of irrep $\alpha$ and $\widetilde{\rho}^{\alpha}$ is a representation of operator $\rho_{1\ldots n}$ on irrep $\alpha$. In the next paragraph we present how to use the decomposition from formula~\eqref{decomp1} and explicit matrix form of irreps of $\opV'(\sigma_{ab})$ to calculate fidelities.
% \be
% \label{dsm3}
% \widetilde{\mbV}'_{\lambda_2}(14)=\frac{1}{2}\begin{pmatrix}2+\sqrt{3} & 1\\ 1 & 2-\sqrt{3}\end{pmatrix},\quad  \widetilde{\mbV}'_{\lambda_2}(24)=\frac{1}{2}\begin{pmatrix}2-\sqrt{3} & 1\\ 1 & 2+\sqrt{3}\end{pmatrix},\quad   \widetilde{\mbV}'_{\lambda_2}(34)=\frac{1}{2}\begin{pmatrix}1 & -1\\ -1 & 1\end{pmatrix},
% \ee

% \be
% \label{dgr3}
% \begin{split}
% \widetilde{\mbV}'_{\lambda_2}(14)&=\frac{1}{9}\begin{pmatrix} 2+5d+4\Delta_2 & -4-d+\Delta_2 & 4+d-\Delta_2 \\ -4-d+\Delta_2 & -1+2d-2\Delta_2 & 1-2d+2\Delta_2 \\ 4+d-\Delta_2 & 1-2d+2\Delta_2 & -1+2d-2\Delta_2 \end{pmatrix},\\
%  \widetilde{\mbV}'_{\lambda_2}(24)&=\frac{1}{9}\begin{pmatrix}-1+2d-2\Delta_2 & -4-d+\Delta_2 & 1-2d+2\Delta_2 \\ -4-d+\Delta_2 & 2+5d+4\Delta_2 & 4+d-\Delta_2 \\ 1-2d-2\Delta_2 & 4+d-\Delta_2 & -1+2d-2\Delta_2 \end{pmatrix},\\
%  \widetilde{\mbV}'_{\lambda_2}(34)&=\frac{1}{9}\begin{pmatrix}-1+2d-2\Delta_2 & -1+2d-2\Delta_2 & 4+d-\Delta_2 \\ -1+2d-2\Delta _2 & -1+2d-2\Delta_2 & 4+d-\Delta_2 \\ 4+d-\Delta_2 & 4+d-\Delta_2 & 2+5d+4\Delta_2 \end{pmatrix},
% \end{split}
% \ee
% where $\Delta_2=\sqrt{(d+1)(d-2)}$.
\end{itemize}

\subsection{Method of calculations}
\label{sec:method}
Since, in principle, calculations techniques are similar to those from \cite{Cwiklinski2012-cloning}, in most cases, proofs are skipped and unless specified otherwise, we refer to the above-mentioned work for them.

In this section we provide a general formula for an allowed region of $N$-tuples
of fidelities in terms of overlaps of pure states with irreducible representations from the previous section.
This is contained in Theorem \ref{thm:main}.

\begin{lemma}
\label{FF}
Fidelity $F_{1k}$ as defined in \eqref{note} is of the form
\be
\label{FFF}
F_{1k}=\sum_{\alpha}F_{1k}^{\alpha},
\ee
where
\be
\label{ffv}
F_{1k}^{\alpha}
= \frac{1}{d} \tr \left( \rho^{\alpha} \mbV'_{\alpha}(k-1n) \right),
\ee
the index $(k-1n)$ means a permutation that swaps $k-1$ and $n$,
and $\rho^{\alpha}$'s are arbitrary normalized states on partition $\alpha$.
\end{lemma}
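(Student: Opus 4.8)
The plan is to start from the definition of $F_{1k}$ in \eqref{note} and rewrite the singlet fraction as a trace of the state $\rho_{1\ldots n}$ against an operator built from a maximally entangled projector on subsystems $1$ and $k$. First I would express $|\psi^{+}_{1k}\>\<\psi^{+}_{1k}|$ tensored with identity on all remaining systems as (a multiple of) the partially transposed permutation operator $\opV'(\sigma)$ for a suitable transposition $\sigma$, using the standard identity that the (unnormalized) maximally entangled projector on two $d$-dimensional systems equals $\tfrac1d$ times the partial transpose of the swap operator on those systems. Under the mapping \eqref{mapping} that relabels $(1k)\mapsto(k-1\,n)$ and the convention $t_n\equiv{}'$, this operator is exactly $\tfrac1d\,\opV'(k-1\,n)$ acting on the appropriate pair of tensor factors, so that
\be
F_{1k}=\< \psi^{+}_{1k} | \tr_{\overline{1k}}(\rho_{1\ldots n}) | \psi^{+}_{1k} \>
= \frac{1}{d}\,\tr\!\left( \rho_{1\ldots n}\, \opV'(k-1\,n) \right).
\ee
The verification that the second ("$\psi^{-}$") form in \eqref{note} gives the same expression proceeds by conjugating with $U\ot\operatorname{\mbI}$ as in \eqref{map1} and using the $U^*\ot U$-invariance of $|\psi^+\>$; I would either carry this out directly or simply cite \cite{Cwiklinski2012-cloning} for the parallel qubit computation, since the authors note proofs are skipped there.

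Next I would invoke the block decomposition \eqref{decomp1}: since $\rho_{1\ldots n}$ is $U^*\ot U\ot\cdots\ot U$ invariant, both it and $\opV'(k-1\,n)$ decompose simultaneously as $\bigoplus_\alpha \mbI_{r(\alpha)}\ot(\cdot)$, with $\rho_{1\ldots n}=\bigoplus_\alpha \mbI_{r(\alpha)}\ot\widetilde\rho^\alpha$ and $\opV'(k-1\,n)=\bigoplus_\alpha \mbI_{r(\alpha)}\ot\mbV'_\alpha(k-1\,n)$. Taking the trace of the product and using $\tr(A\ot B)=\tr A\,\tr B$ together with $\tr\mbI_{r(\alpha)}=r(\alpha)$ gives
\be
F_{1k}=\frac{1}{d}\sum_\alpha r(\alpha)\,\tr\!\left(\widetilde\rho^\alpha\,\mbV'_\alpha(k-1\,n)\right).
\ee
Absorbing the multiplicity factor by setting $\rho^\alpha := r(\alpha)\,\widetilde\rho^\alpha/\tr(r(\alpha)\widetilde\rho^\alpha)$ normalized on the irrep (and noting that the weights $r(\alpha)\tr\widetilde\rho^\alpha$ sum to $\tr\rho_{1\ldots n}=1$, so the $F_{1k}^\alpha$ genuinely add up) yields $F_{1k}=\sum_\alpha F_{1k}^\alpha$ with $F_{1k}^\alpha=\tfrac1d\tr(\rho^\alpha\,\mbV'_\alpha(k-1\,n))$, which is \eqref{FFF}–\eqref{ffv}. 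The last point to address is that the $\rho^\alpha$ are \emph{arbitrary} normalized states: this follows because any collection of normalized block states, weighted by an arbitrary probability distribution over $\alpha$, is realized by some admissible $\rho_{1\ldots n}$ (with the constraint $\widetilde\rho_1=\mbI/d$ imposing no restriction beyond normalization on each block, since the commutant acts irreducibly within each $\alpha$).

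The main obstacle is the first step: correctly identifying which partially transposed permutation operator corresponds to the projector $|\psi^+_{1k}\>\<\psi^+_{1k}|\ot\operatorname{\mbI}_{\overline{1k}}$ once the relabeling \eqref{mapping} and the partial-transposition-over-the-first-subsystem convention are taken into account, and checking that the two alternative definitions of $F_{1k}$ in \eqref{note} really coincide after the $U\ot\operatorname{\mbI}$ conjugation. Everything downstream is the routine bookkeeping of tensoring identities through a multiplicity space, so I would present the operator identification carefully and then move quickly.
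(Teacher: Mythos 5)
Your proposal is correct and follows essentially the same route as the paper: identify $|\psi^{+}_{1k}\>\<\psi^{+}_{1k}|$ with $\tfrac{1}{d}\opV'(1k)$, relabel via \eqref{mapping} to $\mbV'_{\alpha}(k-1\,n)$, and apply the block decomposition \eqref{decomp1} to split the trace over irreps. Your handling of the multiplicity factor $r(\alpha)$ and the remark on why the $\rho^{\alpha}$ may be taken arbitrary are slightly more explicit than the paper's own proof, but the argument is the same.
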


Again, from papers \cite{Studzinski2013,mozrzymas2013} we know that algebra of partially transposed permutation operators $\mathcal{A}'_n(d)$ splits into sum of two ideals, i.e. we have $\mathcal{A}'_n(d)=\mathcal{M}\oplus \mathcal{N}$. In Lemma~\ref{FF} we derived formulas for fidelities for elements in ideal $\mathcal{M}$, now we give similar formulas for elements in ideal $\mathcal{N}$. Physically it means that we looking for fidelities between maximally entangled state and some product state between input state and clones.
\begin{fact}
\label{fact}
Fidelity $F^{\mathcal{N}}_{1k}$ between state $|\psi_{1k}\>$ and a product state $\rho_{1k}=\frac{1}{d}\tr_{\overline{1k}}\left(\text{\noindent\(\mathds{1}\)}_1 \ot \rho_{2\ldots n}\right)$ is equal~\footnote{By $\tr_{\overline{1k}}$ we denote partial trace over all subsystems except $1^{\text{st}}$ and $k^{\text{th}}$.} to $1/d$.
\end{fact}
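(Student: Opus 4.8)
The plan is to obtain the statement from the two-ideal decomposition $\mathcal{A}'_n(d)=\mathcal{M}\oplus\mathcal{N}$ recalled from~\cite{Studzinski2013,mozrzymas2013}. Lemma~\ref{FF} already handles the contribution of the ``$\mathcal{M}$-part'' of a $U^{*}\otimes U^{\otimes n-1}$-invariant $\rho_{1\ldots n}$; the present claim is about the complementary ``$\mathcal{N}$-part'', and the key observation is that $\mathcal{N}\cong\mathbb{C}[S(n-1)]$ is exactly the block in which the input leg $1$ is decoupled from the clone legs, so that the induced two-body state on legs $1$ and $k$ is a product state, whose overlap with the maximally entangled state is a constant not depending on the cloner.

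First I would recall the concrete realization of $\mathcal{N}$ inside $\mathcal{B}\big((\mathbb{C}^d)^{\otimes n}\big)$ from~\cite{Studzinski2013}: it is carried by the ordinary (non-transposed) permutation operators of the clone legs $2,\dots,n$ only, hence its elements act as $\mathds{1}_1\otimes(\,\cdot\,)$ on the input leg. Combining this with the facts that $\rho_{1\ldots n}$ lies in the commutant $\mathcal{A}'_n(d)$ and satisfies the cloning constraint $\rho_1=\mathds{1}_1/d$, I would conclude that its $\mathcal{N}$-component factorizes, $\rho^{\mathcal{N}}_{1\ldots n}=\tfrac1d\,\mathds{1}_1\otimes\rho_{2\ldots n}$ for some positive $\rho_{2\ldots n}$ on the clones --- precisely the ``product state between input state and clones'' referred to above.

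Granting this, the remainder is a one-line computation: partial-tracing over all clones except the $k$-th gives $\rho_{1k}=\tfrac1d\,\tr_{\overline{1k}}(\mathds{1}_1\otimes\rho_{2\ldots n})=\tfrac1d\,\mathds{1}_1\otimes\rho_k$, the form appearing in the statement; then, with $|\psi^{+}\rangle=\tfrac1{\sqrt d}\sum_i|ii\rangle$, the elementary identity $\langle\psi^{+}_{1k}|(\mathds{1}_1\otimes M)|\psi^{+}_{1k}\rangle=\tfrac1d\tr M$ (valid for any $M$ on $\mathbb{C}^d$) applied to $M=\rho_k$, together with the normalization of $\rho_{2\ldots n}$ fixed by the cloning constraint, yields the cloner-independent value $F^{\mathcal{N}}_{1k}=1/d$, with no dependence on $k$, $d$, $n$, or the particular cloning map --- exactly what the convex-hull construction of the next section requires. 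The main obstacle is not this computation but the middle step: turning the representation-theoretic statement ``$\mathcal{N}\cong\mathbb{C}[S(n-1)]$, built from clone-leg permutations'' into the operator identity ``$\rho^{\mathcal{N}}_{1\ldots n}=\tfrac1d\,\mathds{1}_1\otimes\rho_{2\ldots n}$'', and keeping the scalar factors straight through that identification so that the constant comes out as $1/d$.
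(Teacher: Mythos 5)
Your proposal is correct and its essential content coincides with the paper's own proof, which is the same one-line computation: $\tr_{\overline{1k}}(\mathds{1}_1\otimes\rho_{2\ldots n})=\mathds{1}_1\otimes\rho_k$ followed by $\langle\psi_{1k}|(\mathds{1}_1\otimes\rho_k)|\psi_{1k}\rangle\propto\tr\rho_k$. The ``middle step'' you flag as the main obstacle --- deriving the product form $\tfrac1d\,\mathds{1}_1\otimes\rho_{2\ldots n}$ from the structure of the ideal $\mathcal{N}$ --- is not actually needed here, since the Fact takes that product form as a hypothesis (it is built into the definition of $\rho_{1k}$ in the statement), so your additional representation-theoretic justification, while reasonable context, goes beyond what the paper proves at this point.
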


 Now we are in position to formulate the main theorem of this section:
\begin{theorem}
\label{thm:main}
The set $\mathcal{F}$ of admissible vectors of fidelities $\left\{ F_{12}, \ldots, F_{1n} \right\}$ is of the form
\be
\mathcal{F} = \operatorname{conv} \left( \bigcup_{\alpha}  \mathcal{F}^{\alpha} \right),
 \ee
where $\operatorname{conv}$ stands for a convex hull, the union runs over all  irreps and
\be  \mathcal{F}^{\alpha} = \left\{ \left( F_{12}^{\alpha},\ldots, F_{1n}^{\alpha}\right) \ : |\psi\rangle \in \C^{d_{\alpha}}\  \right\}, \ee \label{maintheo}
where $F_{1k}^{\alpha}$ are of the form: $F_{1k}^{\alpha} = \frac{1}{d}\< \psi |\mbV'_{\alpha}(k-1n) |\psi \>$, and where  $|\psi \>$ is a pure state.
\end{theorem}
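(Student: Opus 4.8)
The plan is to assemble Theorem~\ref{thm:main} from the two building blocks already in hand, namely Lemma~\ref{FF} and Fact~\ref{fact}, together with a convexity argument for passing from a single irrep to the union over all irreps. First I would recall from the decomposition~\eqref{decomp1} that any $U^*\otimes U^{\otimes n-1}$-invariant density operator $\rho_{1\ldots n}$ (equivalently, any admissible cloning machine subject to $\widetilde\rho_1 = \mbI/d$) is block-diagonal, $\rho_{1\ldots n} = \bigoplus_\alpha \mbI_{r(\alpha)}\otimes\widetilde\rho^\alpha$, where each $\widetilde\rho^\alpha$ is a positive operator on the irrep space $\C^{d_\alpha}$ and the overall normalization imposes $\sum_\alpha r(\alpha)\,\tr\widetilde\rho^\alpha = 1$. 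Writing $p_\alpha = r(\alpha)\,\tr\widetilde\rho^\alpha$ and $\rho^\alpha = \widetilde\rho^\alpha/\tr\widetilde\rho^\alpha$, we get a convex combination: each $\rho^\alpha$ is a normalized state on $\C^{d_\alpha}$ and $\sum_\alpha p_\alpha = 1$, $p_\alpha \geq 0$. I would also fold Fact~\ref{fact} in here: the contribution of the ideal $\mathcal{N}$ to each $F_{1k}$ is the constant $1/d$, which is itself realized as $\tfrac1d\langle\psi|\mbV'_\alpha(k-1,n)|\psi\rangle$ — one checks $\mbV'_\alpha(\sigma)$ is a (normalized up to $1/d$) projection/trace-one type object so $\tr\rho^\alpha\mbV'_\alpha = 1/d$ for appropriate $\rho^\alpha$ — so that the $\mathcal{N}$-part does not enlarge or shrink the region beyond what the convex hull already provides, and in particular the point $(1/d,\ldots,1/d)$ lies in every $\mathcal{F}^\alpha$.

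Next I would combine this with Lemma~\ref{FF}, which gives $F_{1k} = \sum_\alpha F_{1k}^\alpha$ with $F_{1k}^\alpha = \tfrac1d\tr(\rho^\alpha\mbV'_\alpha(k-1,n))$ — but to match the stated form I must rescale: with the $p_\alpha$ notation, $F_{1k} = \sum_\alpha p_\alpha \big(\tfrac1d\tr(\rho^\alpha\mbV'_\alpha(k-1,n))\big)$, where now $\rho^\alpha$ is genuinely normalized. Hence the full vector $(F_{12},\ldots,F_{1n})$ equals $\sum_\alpha p_\alpha\, v^\alpha$, where $v^\alpha = \big(\tfrac1d\tr(\rho^\alpha\mbV'_\alpha(1,n)),\ldots,\tfrac1d\tr(\rho^\alpha\mbV'_\alpha(n-1,n))\big)\in\mathcal{F}^\alpha$. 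This shows immediately that $\mathcal{F} \subseteq \operatorname{conv}\big(\bigcup_\alpha\mathcal{F}^\alpha\big)$. For the reverse inclusion, I would argue that any $v^\alpha\in\mathcal{F}^\alpha$ is itself achievable by a legitimate $\rho_{1\ldots n}$ (put all the weight on block $\alpha$, i.e. $\widetilde\rho^\alpha = |\psi\rangle\langle\psi|/r(\alpha)$ and the other blocks zero — this is a valid invariant state with $\widetilde\rho_1 = \mbI/d$, a point inherited from the structural results of~\cite{Studzinski2013,mozrzymas2013}), and that $\mathcal{F}$ is convex because the set of admissible invariant states is convex and the fidelities are linear functionals of the state; so $\mathcal{F}$ contains $\bigcup_\alpha\mathcal{F}^\alpha$ and its convex hull. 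The two inclusions give the claimed equality $\mathcal{F} = \operatorname{conv}(\bigcup_\alpha\mathcal{F}^\alpha)$.

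Finally I would address the reduction to pure states $|\psi\rangle\in\C^{d_\alpha}$ appearing in the definition of $\mathcal{F}^\alpha$: for a fixed irrep $\alpha$, the map $\rho^\alpha \mapsto \big(\tfrac1d\tr(\rho^\alpha\mbV'_\alpha(k-1,n))\big)_k$ is linear in $\rho^\alpha$, so its image over the (convex, compact) set of normalized states on $\C^{d_\alpha}$ is the convex hull of its image over the extreme points, which are exactly the pure states $|\psi\rangle\langle\psi|$; thus the set of $v^\alpha$ obtained from mixed $\rho^\alpha$ is already $\operatorname{conv}\mathcal{F}^\alpha$, and taking the outer convex hull over $\alpha$ absorbs this, so writing $\mathcal{F}^\alpha$ with pure states only costs nothing. (The further remark that real coefficients suffice, promised in the abstract, follows from the matrices $\mbV'_\alpha(\sigma_{ab})$ being real symmetric, so that for any $|\psi\rangle$ the real and imaginary parts give states with the same or interpolating fidelity vectors; I would state this as a short corollary rather than inside the proof.) The main obstacle I anticipate is the bookkeeping in the first paragraph — correctly tracking the multiplicities $r(\alpha)$ and the normalization so that the $p_\alpha$ genuinely form a probability vector, and verifying via Fact~\ref{fact} that the $\mathcal{N}$-ideal contributions and the "untouched first subsystem" constraint $\widetilde\rho_1 = \mbI/d$ are compatible with freely choosing the block weights $p_\alpha$; everything after that is routine convexity.
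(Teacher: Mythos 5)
Your overall skeleton --- block-diagonalizing the invariant state via \eqref{decomp1}, extracting weights $p_\alpha=r(\alpha)\tr\widetilde\rho^\alpha$ so that the fidelity vector becomes a convex combination $\sum_\alpha p_\alpha v^\alpha$, proving the two inclusions, and reducing to pure states by an extreme-point argument --- is exactly the route the paper takes (its own proof is a one-line deferral to the qubit case of \cite{Cwiklinski2012-cloning}, and your write-up is in fact more explicit than what is printed). The achievability direction (all weight on one block, with $\rho_1=\mbI/d$ automatic from invariance and the Remark guaranteeing a legitimate cloning map) and the linearity/extreme-point reduction are fine.

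There is, however, one genuine error: your treatment of the ideal $\mathcal{N}$. You assert that $\mbV'_\alpha(\sigma)$ is a ``trace-one projection type object'' so that $\tr(\rho^\alpha \mbV'_\alpha)=1/d$ on the $\mathcal{N}$-irreps, and you conclude that the point $(1/d,\ldots,1/d)$ already lies in every $\mathcal{F}^\alpha$, so that the $\mathcal{N}$-part ``does not enlarge or shrink the region.'' Both claims fail. By Theorem~\ref{th34} the partially transposed transpositions are non-invertible elements of $\mathcal{M}$ and are therefore represented by \emph{zero} on every $\mathcal{N}$-irrep, so the trace formula of Lemma~\ref{FF} gives $0$ there, not $1/d$; the value $1/d$ is obtained only by the separate direct computation of Fact~\ref{fact}, which evaluates the singlet fraction of the corresponding (product) reduced state without passing through $\mbV'_\alpha$. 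Consequently the $\mathcal{N}$-contribution is a genuine extra vertex $(1/d,\ldots,1/d)$ of the convex hull, which in general lies \emph{outside} the sets $\mathcal{F}^\alpha$ generated by the $\mathcal{M}$-irreps --- this is visible in Figures~\ref{fig:ch2} and \ref{fig:ch}, where the $\mathcal{N}$-point is plotted as an isolated point away from the ellipses/surfaces. Dropping it, as your argument implicitly does, would shrink the admissible region and falsify the theorem. The fix is simply to include the $\mathcal{N}$-irreps in the union with their fidelity vector given by Fact~\ref{fact} (the constant point) rather than by the formula $\frac1d\langle\psi|\mbV'_\alpha(k-1\,n)|\psi\rangle$, which applies only to the irreps contained in $\mathcal{M}$.
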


Let us note that to determine the allowed region of fidelities, it is enough to consider only
vectors of real coefficients.
\begin{lemma}
\label{real}
To generate a convex hull of the allowed region of fidelities, it is sufficient to consider pure states of real coefficients only.
\end{lemma}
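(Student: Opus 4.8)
The plan is to reduce the statement to a single linear–algebra observation about the matrices $\mbV'_{\alpha}(k-1n)$. The first step is to record that each of these is a \emph{real symmetric} matrix. Indeed, for a transposition $\sigma_{ab}$ the permutation operator is real and symmetric in the computational basis, and partial transposition preserves this property; moreover the explicit irreducible blocks recalled in Section~\ref{sec:tools} (and, in general, those of \cite{Studzinski2013,mozrzymas2013}) have the form $D^{\alpha}\,(\text{real symmetric})\,D^{\alpha}$ with $D^{\alpha}$ real diagonal, hence are again real symmetric. So one may and does take the $\mbV'_{\alpha}(\sigma_{ab})$ to be real symmetric throughout.

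Second, fix an irrep $\alpha$ and an arbitrary unit vector $|\psi\rangle\in\C^{d_{\alpha}}$, and split it into real and imaginary parts, $|\psi\rangle=|a\rangle+\rmi|b\rangle$ with $|a\rangle,|b\rangle\in\R^{d_{\alpha}}$. For any real symmetric $M$ one has
\[
\langle\psi|M|\psi\rangle=\langle a|M|a\rangle+\langle b|M|b\rangle+\rmi\bigl(\langle a|M|b\rangle-\langle b|M|a\rangle\bigr)=\langle a|M|a\rangle+\langle b|M|b\rangle ,
\]
since $\langle a|M|b\rangle=\langle b|M|a\rangle$ are real and cancel (this also explains why the fidelities come out real). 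Writing $p=\langle a|a\rangle$ and $q=\langle b|b\rangle=1-p$, if $0<p<1$ then with $|\hat a\rangle=|a\rangle/\sqrt p$ and $|\hat b\rangle=|b\rangle/\sqrt q$ real unit vectors we get $\langle\psi|M|\psi\rangle=p\,\langle\hat a|M|\hat a\rangle+q\,\langle\hat b|M|\hat b\rangle$; and if $p\in\{0,1\}$ then $|\psi\rangle$ is already real up to a global phase, which does not change $\langle\psi|M|\psi\rangle$.

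Third, I would apply this \emph{simultaneously} to $M=\mbV'_{\alpha}(k-1n)$ for every $k=2,\dots,n$ — the decomposition $|\psi\rangle=|a\rangle+\rmi|b\rangle$ and the weights $p,q$ depend only on $|\psi\rangle$, not on $k$. Hence the fidelity vector $(F_{12}^{\alpha},\dots,F_{1n}^{\alpha})$ attached to $|\psi\rangle$ equals $p$ times the fidelity vector of $|\hat a\rangle$ plus $q$ times that of $|\hat b\rangle$, so it lies in the convex hull of fidelity vectors coming from real unit vectors. Therefore $\mathcal{F}^{\alpha}\subseteq\operatorname{conv}(\mathcal{F}^{\alpha}_{\R})$, where $\mathcal{F}^{\alpha}_{\R}$ is the analogue of $\mathcal{F}^{\alpha}$ with $|\psi\rangle$ restricted to real vectors; the reverse inclusion $\mathcal{F}^{\alpha}_{\R}\subseteq\mathcal{F}^{\alpha}$ is trivial, whence $\operatorname{conv}(\mathcal{F}^{\alpha})=\operatorname{conv}(\mathcal{F}^{\alpha}_{\R})$. (The ideal-$\mathcal{N}$ contribution of Fact~\ref{fact} is the constant $1/d$ and plays no role.) Finally, using $\operatorname{conv}\bigl(\bigcup_i A_i\bigr)=\operatorname{conv}\bigl(\bigcup_i\operatorname{conv}(A_i)\bigr)$ together with Theorem~\ref{thm:main} gives $\mathcal{F}=\operatorname{conv}\bigl(\bigcup_\alpha\mathcal{F}^{\alpha}\bigr)=\operatorname{conv}\bigl(\bigcup_\alpha\mathcal{F}^{\alpha}_{\R}\bigr)$, which is the claim. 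I do not anticipate a real obstacle: the only points needing care are that the real/imaginary split be performed once and reused for all coordinates $k$, and that the representation matrices $\mbV'_{\alpha}(\sigma_{ab})$ are genuinely real symmetric — both of which are immediate.
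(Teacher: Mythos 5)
Your proposal is correct and follows essentially the same route as the paper: the paper's proof simply defers to the argument of \cite{Cwiklinski2012-cloning} and notes that the one new ingredient is the real symmetry of the matrices $\mbV'_{\alpha}(in)$ (established in Corollary~\ref{symm}), which is exactly the property your real/imaginary-part decomposition exploits. You merely spell out in full the convex-combination argument that the paper leaves implicit.
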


\subsection{Main result}
\label{sec:main}
In this section we present our results for two particular cases $1 \rightarrow 2$ and $1 \rightarrow 3$ universal quantum cloners.

Let us start with noting that to obtain a general answer to our question from Section \ref{sec:problem}, we need to have a mixture of all fidelities connected with our irreps: $\sum_{\alpha} p_{\alpha} F_{1N}^{\alpha}$. This implies that a convex hull is needed. On Figures \ref{fig:ch2} and \ref{fig:ch} we show plots for $N = 2,3$ and different dimensions $d$ before taking the convex hull, so one can see a contribution from each irrep. Then, we take one particular case, namely $1 \rightarrow N$ $UQCM$ and $d=3$ and present the convex hull for it that reproduces the allowed region for fidelities  (Figure \ref{fig:ch1}). All plots are obtained using $Mathematica$ software.

\begin{remark}
Because of the properties of the cloning map $\widetilde{\Lambda}$ (see Sec. \ref{sec:problem}) all possible convex mixtures of the partitions produce a correct quantum cloner, i.e. a trace preserving completely positive map.
\end{remark}

\begin{figure}[H]
\begin{center}$
\begin{array}{c}
\includegraphics[scale=0.6]{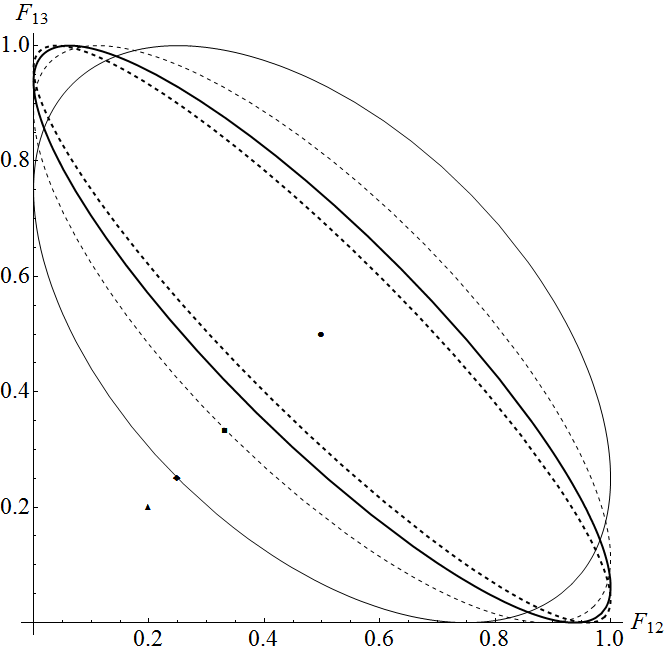}
\end{array}$
\end{center}
\caption{The plot of allowed regions of fidelities for $1\rightarrow 2$ $UQCM$. Views for various dimensions $d$ of the Hilbert space are presented: thin grey line and black point ($d=2$); thin, dashed grey line and square ($d=3$); thick line and diamond ($d=4$); thick, dashed line and triangle ($d=5$). One can see that for $d \rightarrow \infty$ the ellipse is squeezed to the line $F_{13}=-F_{12}+1$ and coordinates of the point obtained from the part $\mathcal{N}$ go to zero.}
\label{fig:ch2}
\end{figure}

\begin{figure}[H]
\begin{center}
$
\begin{array}{ccc}
\includegraphics[width=0.4\textwidth, height=0.35\textheight]{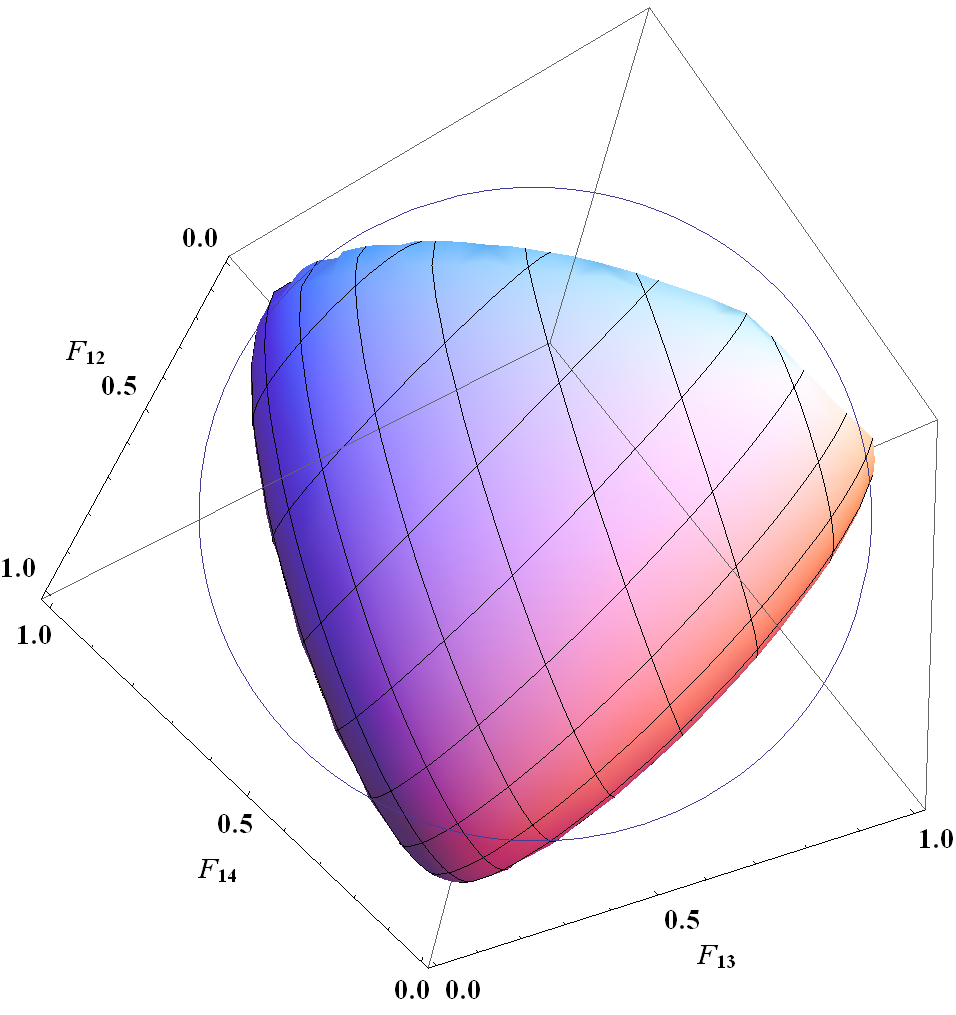} & \qquad
\includegraphics[width=0.4\textwidth, height=0.35\textheight]{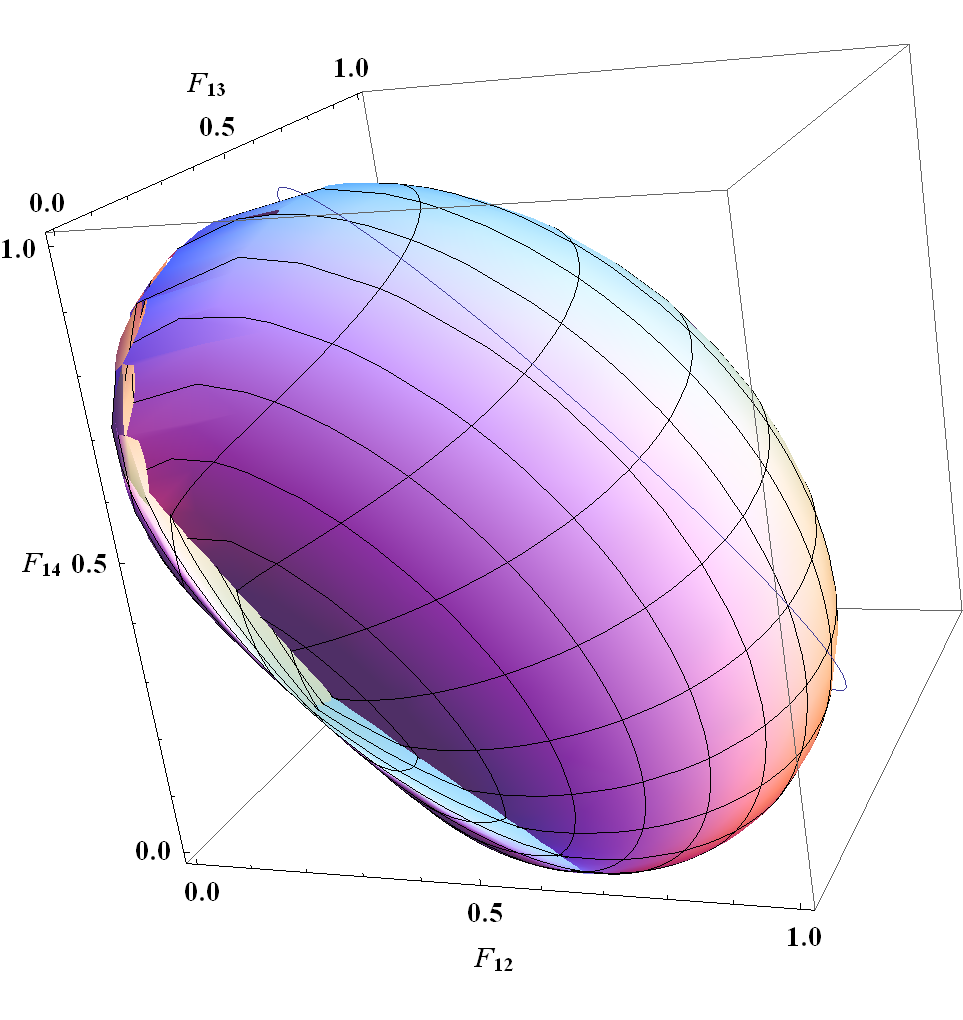}\\
\includegraphics[width=0.4\textwidth, height=0.35\textheight]{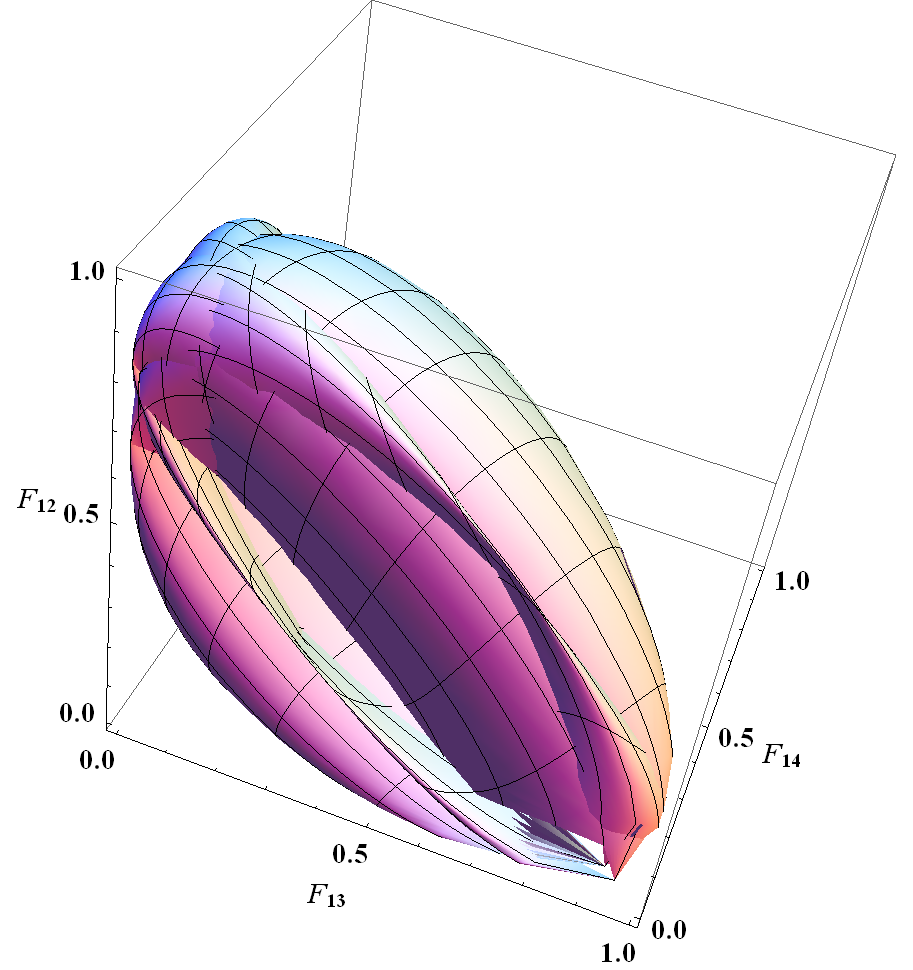} & \qquad
\includegraphics[width=0.4\textwidth, height=0.35\textheight]{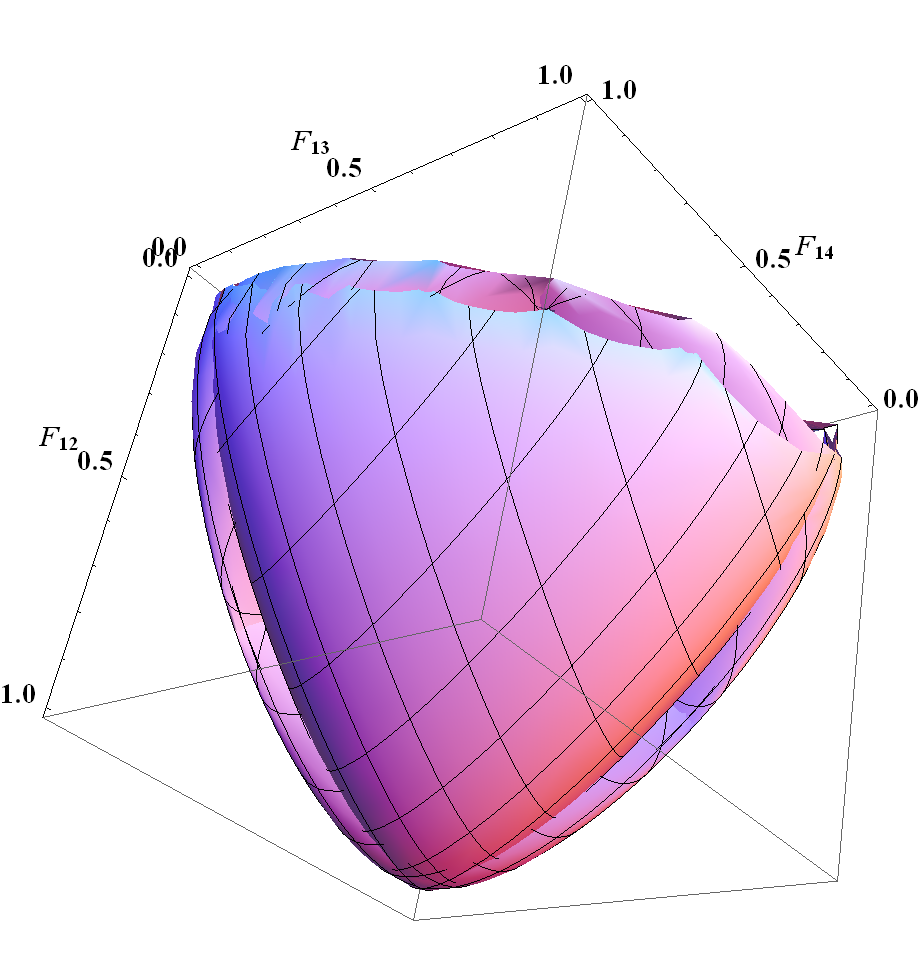}\\
\includegraphics[width=0.4\textwidth, height=0.35\textheight]{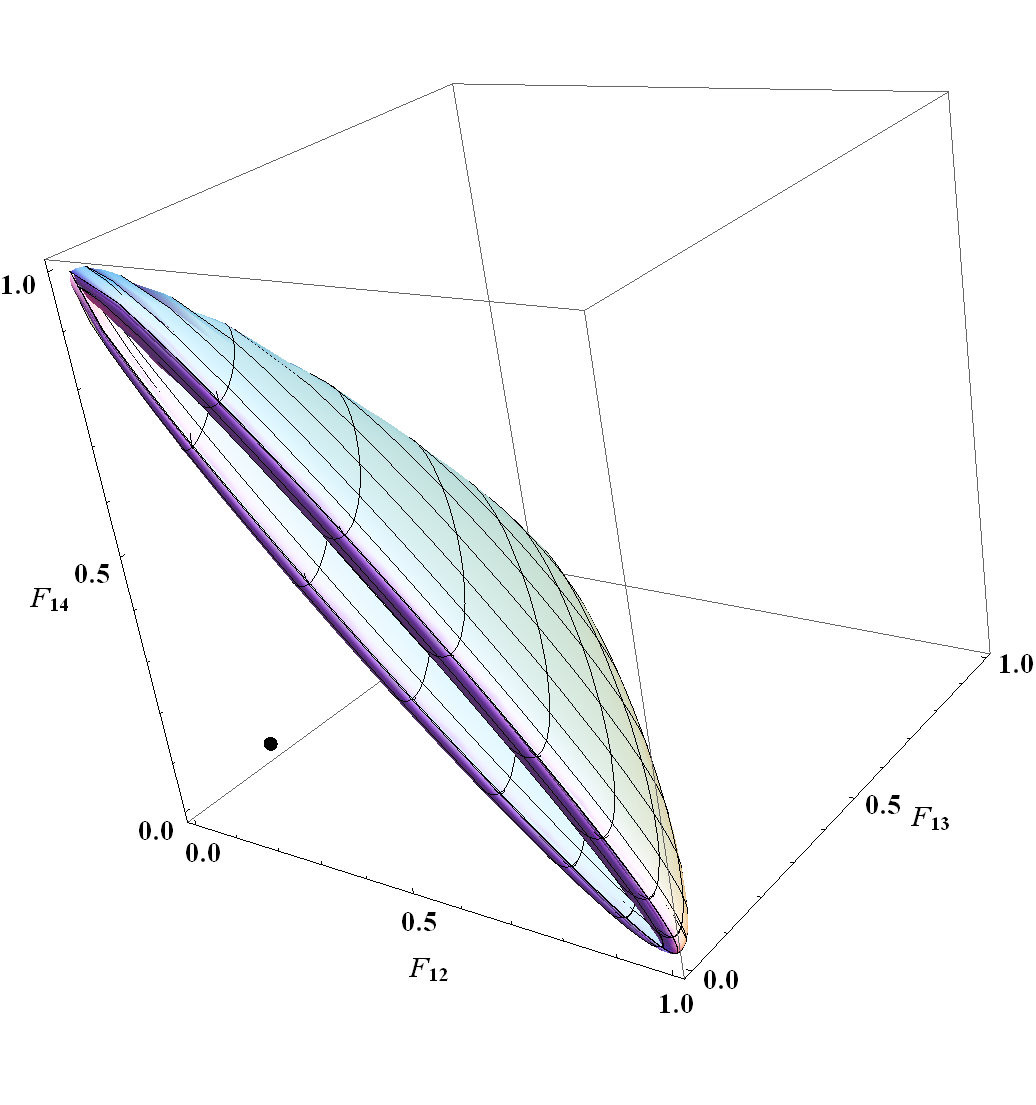} & \qquad
\includegraphics[width=0.4\textwidth, height=0.35\textheight]{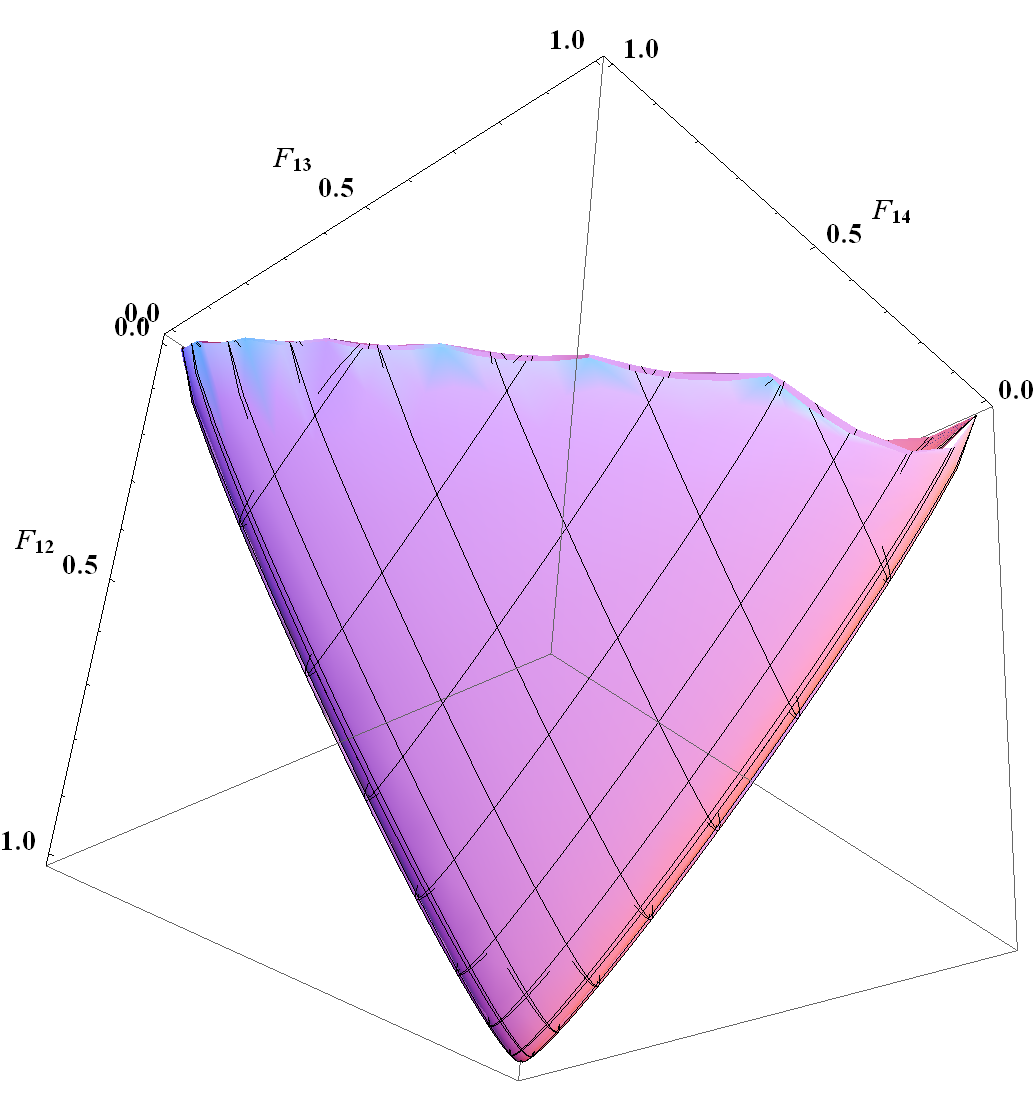}
\end{array}
$
\end{center}
\caption{\textbf{(Color online)} The plot of allowed regions of fidelities for $1\rightarrow 3$ $UQCM$. Views for various dimensions $d$ of the Hilbert space and all allowed irreps are presented. From the top: $d=2$, $d=3$ and $d=10$. One can see that for $d=2$ we match results from~\cite{Cwiklinski2012-cloning} and this is the only case where irreps from $\mathcal{M}$ are two dimensional (in this case we have an ellipse). For higher dimensions $d \rightarrow \infty$ all regions obtained from the part $\mathcal{M}$ are squeezed and coordinates of points from the part $\mathcal{N}$ go to zero.}
\label{fig:ch}
\end{figure}

\begin{figure}[H]
\begin{center}$
\begin{array}{ccc}
\includegraphics[width=0.4\textwidth, height=0.35\textheight]{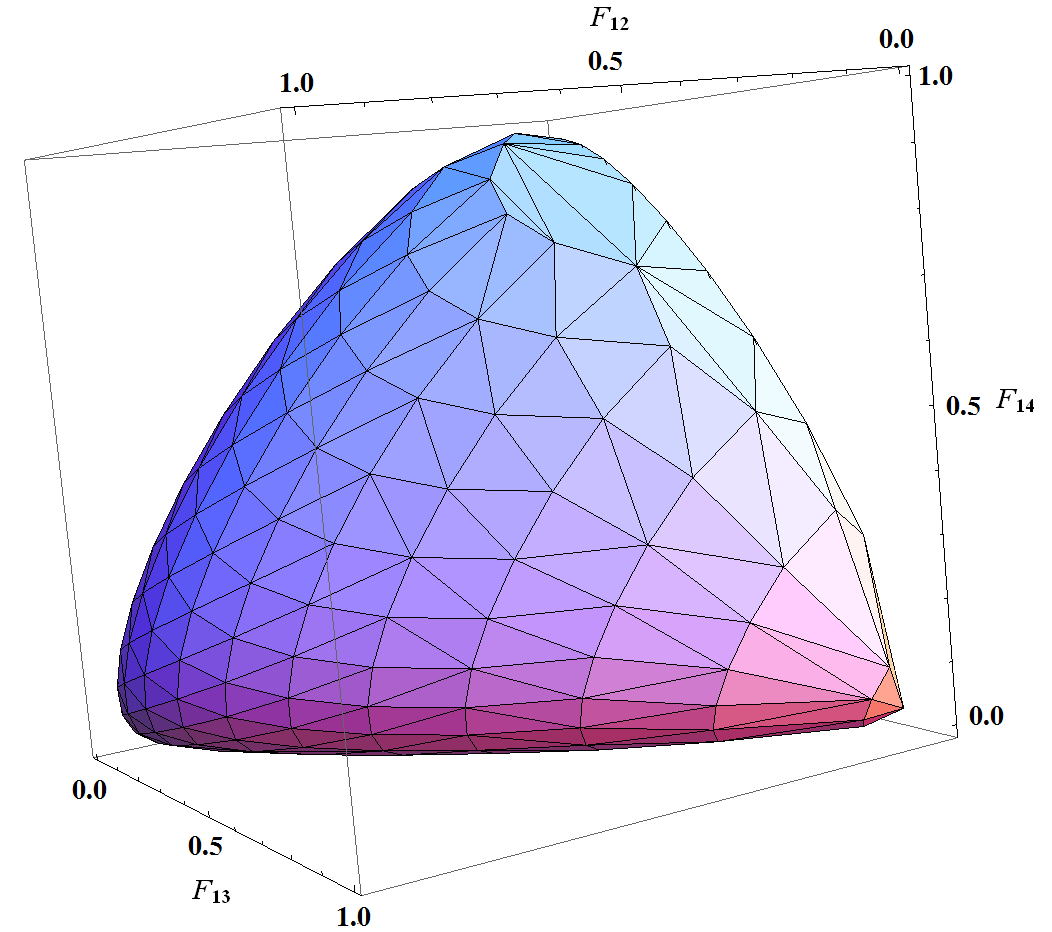} & \quad & \includegraphics[width=0.4\textwidth, height=0.35\textheight]{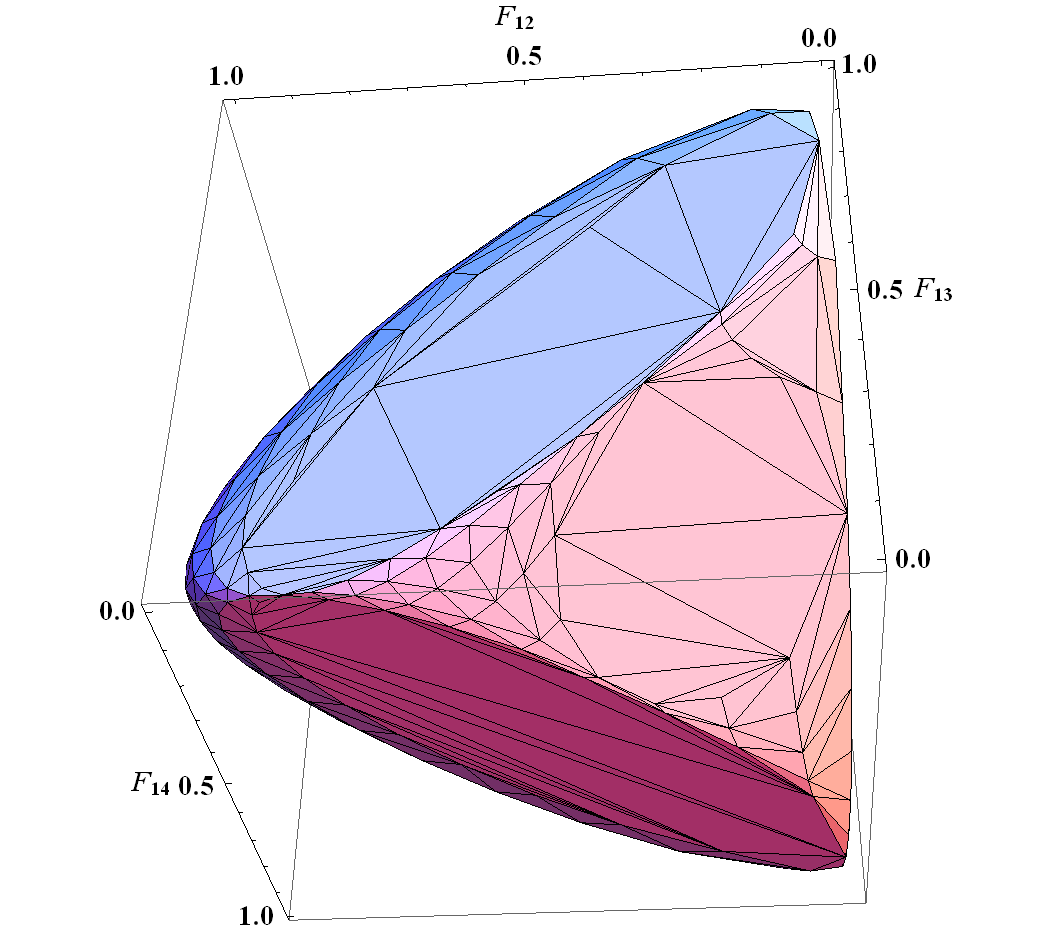}
\end{array}$
\end{center}
\caption{\textbf{(Color online)} As an example convex hull for $1\rightarrow 3$ of $UQCM$ and $d=3$ is presented. }
\label{fig:ch1}
\end{figure}

\subsection{Comparison with other methods}
\label{sec:com}
First of all, let us notice that our method gives correct results (according to the Werner's formula \cite{Werner-cloning1998}) in the case of symmetric cloning (see, \cite{Cwiklinski2012-cloning}, for a possible technique of checking that). What is more, the regions of fidelities obtained for $d=2$ (qubits) match those obtained using Schur-Weyl duality \cite{Cwiklinski2012-cloning}. Last, but not least, our method seems to correctly reproduce results obtained in \cite{Kaszlikowski2012-cloning}, where the solution to the $1 \rightarrow N$ universal asymmetric qudit cloning problem for which the exact
trade-off in the fidelities of the clones for every $N$ and $d$ has been derived. The authors obtained their result using various tools, like the Choi-Jamio{\l}kowski isomorphism \cite{Choi1972, Jamiolkowski} and some variance of the Lieb-Mattis theorem \cite{Lieb1962-LB, Nachtergaele2006-LM}. The crucial part of their proof is the observation that the cloning problem can be mapped to some Heisenberg Hamiltonian on a star. Comparing their technique with ours, one can observe that they solve the problem from the side of the cloning map $\widetilde{\Lambda}$, when we attack it from the side of the $n$-parties quantum state (see, Eq. \eqref{map} and \eqref{map1}).

\section{Conclusions}
We have shown that using a more general version of Schur - Weyl duality, action of the universal $1 \rightarrow N$ quantum cloning machine can be described, allowing to obtain the admissible general region for fidelities. Contrary to other known methods, in our, quantum states are figures of merit.  The method exploits decomposition of (usually big) Hilbert space into blocks of smaller dimensions which, of course, are
easier to deal with.  Fidelity expressions are then quite easy to obtain, one only needs to know representations of all possible irreps for a
given case. Another advantage is that one can consider real pure states in each of the block only when generating convex hulls to obtain an allowed region for fidelities. Let us also notice that it is the first physical application of techniques developed in \cite{Studzinski2013} and, up to our best knowledge, the first graphical presentation of allowed regions for $1 \rightarrow 2$ and $1 \rightarrow 3$ cloners for $d>2$.

%It is worth to say a few words about some advantages of our model 
Let us now shortly discuss the resutls. First of all, suppose that we choose some point that lays outside of the allowed convex hull. Then there does not exist a quantum state that would correspond to that point. On the other hand, whenever we choose points from the convex hull (from inside or from the edge) we are able to derive a family of quantum states for which fidelities are fixed and have values determined by the chosen point. apart from the above-mentioned reconstruction of states from the convex hull, we can try to find, for example, all allowed quantum states which satisfy some required condition for relations between fidelities $F_{1k}$. For example, for $1 \rightarrow 3$ universal cloning machines we can demand the following constraint
\begin{equation}
 F_{12}+F_{13}=2F_{14},
\end{equation}
where we take maximization over $F_{12}$. Such a reconstruction was presented in our previous paper regarding admissible region of fidelities for the qubit case~\cite{Cwiklinski2012-cloning}. Finally, having these states we can reconstruct a cloning machine which returns clones with  fidelities $f_i$, corresponding to fidelities $F_{1i}$ given by the chosen point.

We have also interesting interpretation of the most bottom part of our plots as optimal anti-clones. First of all one can notice that our convex hulls are invariant with respect to rotations around straight line $F_{12}=F_{13}=F_{14}$ by the angle $\beta=2\pi/3$ in the case $1 \rightarrow 3$ $UQCM$ and they are symmetric with respect to the straight line $F_{12}=F_{13}$ in the case $1 \rightarrow 2$ $UQCM$. The most bottom point is determined by the intersection between symmetry line and convex hull and it corresponds to a minimum value of fidelities which are equal in these cases.

In the future, it would be interesting how to obtain optimal clones starting from our method. Numerically, it is not that hard, one just needs to add a cut to the general region to end with optimal region of fidelities. Analytically the answer does not seem to be so trivial, but we still hope that the employed group theoretical techniques are interesting and may provide some new insight into the inner structure of the optimal universal asymmetric quantum cloners.

Finally, let us note that to solve a $M \rightarrow N$ ($M < N$, $M+N=n$) cloning problem, one needs to posses a knowledge of the commutant structure of a $U^{\ot N}\ot \left(U^*\right)^{\ot M}$ transformation, where one has $M$ conjugate elements $U^*$ and $N$ elements $U$ \cite{Studzinski2013,mozrzymas2013}.

{\bf Acknowledgments:} Helpful discussions with Ravishankar Ramanathan are acknowledged. We would also like to thank the anonymous referee for very useful comments regarding our manuscript. M. S. is supported by the International PhD Project "Physics of future quantum-based information technologies": grant MPD/2009-3/4 from Foundation for Polish Science. M. S. and P. \'C. are supported also by grant 2012/07/N/ST2/02873 by National Science Center. M. M. and M.H. are supported by Polish Ministry of Science and Higher Education Grant no. IdP2011 000361. Part of this work was done in National Quantum Information Centre of Gda\'nsk.
\section{Appendix}
\label{app}
\subsection{Algebra of partially transposed permutation operators}
\label{remind}
Here we present short summary of paper~\cite{mozrzymas2013} which is crucial for construction of our results. For the reader convenience we keep here original notation.
It appears that the structure irreducible representations of the algebra $%
\mathcal{A}'_n(d)$ is closely related to the structure of the representation $%
\ind_{S(n-2)}^{S(n-1)}(\varphi ^{\alpha })$ of the group $S(n-1)$ induced by
irreducible representations $\varphi ^{\alpha }$ of the group $S(n-2)$ and
the properties of irreducible representations of $\mathcal{A}'_n(d)$ depends
strongly on the relation between $d$ and $n$. Before presenting the main
ideas of this appendix we have to describe briefly some object appearing in
the structure of the algebra $\mathcal{A}'_n(d)$, in particular the properties
of the induced representation $\ind_{S(n-2)}^{S(n-1)}(\varphi ^{\alpha }).$
The irreducible representations of the group $S(n-2)$ are characterized by the partitions $%
\alpha =(\alpha _{1},...,\alpha _{k})$ of $n-2,$ which describe also the
corresponding Young diagram $Y(\alpha ).$ \ The representation $%
\ind_{S(n-2)}^{S(n-1)}(\varphi ^{\alpha })$ is completely and simply
reducible i.e. we have~\cite{Fulton1991-book-rep}.

\begin{proposition}
\label{prop27}
\be
\ind_{S(n-2)}^{S(n-1)}(\varphi ^{\alpha })=\bigoplus _{\nu }\psi ^{\nu },
\ee
where the sum is over all partitions $\nu =(\nu _{1},...,\nu _{k})$ of $n-1,$
such that their Young diagrams $Y(\nu )$ are obtained from $Y(\alpha )$ by
adding, in a proper way, one box.
\end{proposition}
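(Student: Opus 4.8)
The statement is the classical branching rule for the symmetric group in its induction form, so the plan is to reduce it to the dual restriction rule by Frobenius reciprocity. Since every representation here lives over $\C$, Maschke's theorem already makes $\ind_{S(n-2)}^{S(n-1)}(\varphi^{\alpha})$ completely reducible, so the entire content is a multiplicity count. For a partition $\nu$ of $n-1$, Frobenius reciprocity gives
\be
\pairing{\ind_{S(n-2)}^{S(n-1)}(\varphi^{\alpha})}{\psi^{\nu}}_{S(n-1)}=\pairing{\varphi^{\alpha}}{\operatorname{Res}^{S(n-1)}_{S(n-2)}\psi^{\nu}}_{S(n-2)},
\ee
so it suffices to show that the right-hand side is $1$ when $Y(\nu)$ arises from $Y(\alpha)$ by adding one box and $0$ otherwise.

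The key input I would establish first is the restriction rule: for any partition $\nu$ of $m$,
\be
\operatorname{Res}^{S(m)}_{S(m-1)}\psi^{\nu}\cong\bigoplus_{\mu}\psi^{\mu},
\ee
the sum over partitions $\mu$ of $m-1$ whose diagram is $Y(\nu)$ with one removable corner box deleted, each with multiplicity one. The cleanest proof uses the Specht module $S^{\nu}$ with its standard basis indexed by standard Young tableaux of shape $\nu$: grouping those tableaux according to the row that contains the entry $m$ yields an explicit $S(m-1)$-stable filtration whose successive quotients are the Specht modules $S^{\mu}$ with $\mu=\nu\setminus\square$, and in characteristic zero this filtration splits. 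Alternatively one reads off the same decomposition from Young's orthogonal (seminormal) form, where the action of the first $m-2$ Coxeter transpositions on the seminormal basis makes it manifest.

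Combining the two, the multiplicity of $\psi^{\nu}$ in $\ind_{S(n-2)}^{S(n-1)}(\varphi^{\alpha})$ equals that of $\varphi^{\alpha}$ in $\operatorname{Res}^{S(n-1)}_{S(n-2)}\psi^{\nu}$, which by the restriction rule is $1$ exactly when $Y(\alpha)$ is $Y(\nu)$ minus one corner box, that is, when $Y(\nu)$ is obtained from $Y(\alpha)$ by adding one box ``in a proper way,'' and is $0$ otherwise. Hence $\ind_{S(n-2)}^{S(n-1)}(\varphi^{\alpha})=\bigoplus_{\nu}\psi^{\nu}$ over precisely these $\nu$, all multiplicities being one, which is the asserted completely and simply reducible form. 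As a numerical consistency check, $\sum_{\nu}\dim\psi^{\nu}=(n-1)\dim\varphi^{\alpha}=[S(n-1):S(n-2)]\dim\varphi^{\alpha}$, the dimension of an induced module.

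A shorter but less self-contained variant passes through the Frobenius characteristic map, under which $\psi^{\nu}\mapsto s_{\nu}$ (a Schur function) and the induction product corresponds to multiplication of symmetric functions; then $\ind_{S(n-2)}^{S(n-1)}(\varphi^{\alpha})$ corresponds to $s_{\alpha}\cdot s_{(1)}=s_{\alpha}h_{1}$ and the Pieri rule gives $s_{\alpha}h_{1}=\sum_{\nu}s_{\nu}$ with $\nu/\alpha$ a single box. In both routes the only genuine work is the one structural ingredient one builds on — the restriction branching rule in the first, or the Pieri rule together with the ring-isomorphism property of $\operatorname{ch}$ in the second; both are entirely standard and, as the excerpt notes, can simply be cited from \cite{Fulton1991-book-rep}.
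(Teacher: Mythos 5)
Your argument is correct: Frobenius reciprocity plus the one-box restriction rule (or, equivalently, the Frobenius characteristic and the Pieri rule) is exactly the standard proof of this branching statement. The paper itself offers no proof at all, simply citing the result from the representation-theory literature, so your write-up supplies precisely the standard argument that the citation points to; the dimension check $\sum_{\nu}\dim\psi^{\nu}=(n-1)\dim\varphi^{\alpha}$ is a nice sanity confirmation but carries no additional logical weight.
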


\begin{definition}~\cite{Curtis}
\label{def8}
Let $\varphi :H\rightarrow M(n,%
%TCIMACRO{\U{2102} }%
%BeginExpansion
\mathbb{C}
%EndExpansion
)$ be a matrix representation of a subgroup $H$ of the group $G.$ \ Then the
matrix form of the induced representation $\pi =\ind_{H}^{G}(\varphi )$ of a
group $G$ induced by an irrep. $\varphi $ of the subgroup $H\subset G$ has
the following block matrix form
\[
\forall g\in G\quad \pi _{ai}^{bj}(g)=(\widehat{\varphi }%
_{ij}(g_{a}^{-1}gg_{b})),
\]%
where $g_{a},$ $a=1,...,[G:H]$ are representatives of the left cosets $G/H$
and
\[
\widehat{\varphi }_{ij}(g_{a}^{-1}gg_{b}) = \left\{ \begin{array}{ll}
\varphi _{ij}(g_{a}^{-1}gg_{b}) & \textrm{if $ \ g_{a}^{-1}gg_{b}\in H$,}\\
0 & \textrm{if $ \ g_{a}^{-1}gg_{b}\notin H$}.
\end{array} \right.
\]
\end{definition}

Before main considerations for the appendix let us introduce some notation.
\begin{notation}
\label{not9}
Any permutation $\sigma \in S(n)$ defines, in a natural and unique way, two
natural numbers $a,b\in \{1,2,...,n\}$%
\[
n=\sigma (a),\qquad b=\sigma (n)
\]

Thus we may characterize any permutation by these two numbers in the
following way%
\[
\sigma \equiv \sigma _{(a,b)}\equiv \sigma_{ab}.
\]%
Note that in general $a,b$ may be different except the case, when one of
them is equal to $n,$ because in this case we have
\[
a=n\Leftrightarrow b=n.
\]%
When $a=n=b,$ then $\sigma (n)=n$ and we will use abbreviation $\sigma
=\sigma _{(n,n)}\equiv \sigma _{n}\in S(n-1) \subset S(n).$
\end{notation}

From Proposition~\ref{prop27} and Definition~\ref{def8} it follows that the induced representation $%
\ind_{S(n-2)}^{S(n-1)}(\varphi ^{\alpha })$ may be described in two bases.
The first one, is the basis of the matrix form of the induced representation
 of the form%
\be
\{e_{i}^{a}(\alpha ):a=1,...,n-1,\quad i=1,...,\dim \varphi ^{\alpha }\},
\ee
where the index $a=1,...,n-1$ describes the the cosets $S(n-1)/S(n-2)$ and
the the index $i=1,...,\dim \varphi ^{\alpha }$ is the index of a matrix
form of $\varphi ^{\alpha }.$ The second one is a basis of the reduced form
of $\ind_{S(n-2)}^{S(n-1)}(\varphi ^{\alpha })$, which is of the
form%
\be
\left\{f_{j_{\nu }}^{\nu }:\psi ^{\nu }\in \ind_{S(n-2)}^{S(n-1)}(\varphi ^{\alpha
}),\quad j_{\nu }=1,...,\dim \psi ^{\nu }\right\}.
\ee

\bigskip The next important objects are the following matrices

\begin{definition}
\label{def28}
For any irreducible representation $\varphi ^{\alpha }$ of the group $S(n-2)$
we define the block matrix
\be
Q_{n-1}^{d}(\alpha )\equiv Q(\alpha )=(d^{\delta _{ab}}\varphi _{ij}^{\alpha
}[(an-1)(ab)(bn-1)])=(Q_{ij}^{ab}(\alpha )),
\ee
where $a,b=1,...,n-1,\quad i,j=1,...,\dim \varphi ^{\alpha }$ and the blocs
of the matrix $Q(\alpha )$ are labeled by indices $(a,b)$ whereas the
elements of the blocks are labeled by the indices of the irreducible
representation $\varphi ^{\alpha }=(\varphi _{ij}^{\alpha })$ of the group $%
S(n-2)$ and $Q(\alpha )\in M((n-1)w^{\alpha },\mathbb{C}).$
\end{definition}

The matrices $Q(\alpha )$ are hermitian and their structure and properties
are described in the~\cite{mozrzymas2013}, where it has been shown, that the
eigenvalues $\lambda _{\nu }$ of the matrix $Q(\alpha )$ are labeled by
the irreducible representations $\psi ^{\nu }\in
\ind_{S(n-2)}^{S(n-1)}(\varphi ^{\alpha })$ and the multiplicity of $\lambda
_{\nu }$ is equal to $\dim \psi ^{\nu }$. The essential for properties of
the irreducible representations of the algebra $\mathcal{A}'_n(d)$ is the
fact, that at most one (up to the multiplicity) eigenvalue $\lambda _{\nu }$
of the matrix $Q(\alpha )$ may be equal to zero~\cite{mozrzymas2013,Studzinski2013}.

The structure of the algebra $\mathcal{A}'_n(d)$ is the following

\begin{theorem}
\label{th29}
The algebra $\mathcal{A}'_n(d)$ is a direct sum of two ideals
\be
\mathcal{A}'_n(d)=\mathcal{M}\oplus \mathcal{N}
\ee
and the ideals $\mathcal{M}$ and $\mathcal{N}$ has different structures.
\begin{enumerate}[a)]
\item The ideal $\mathcal{M}$ is of the form
\be
\mathcal{M}=\bigoplus _{\alpha }U(\alpha ),
\ee
where $U(\alpha )$ are ideals of the algebra $\mathcal{A}'_n(d)$
characterized by the irreducible representations $\varphi ^{\alpha }$ of the
group $S(n-2)$, such that $\varphi ^{\alpha }\in \opV_{d}[S(n-2)]$ and
\be
U(\alpha )=\Span_{\mathbb{C}}\{u_{ij}^{ab}(\alpha ):a,b=1,...,n-1,\quad i,j=1,...,w^{\alpha }\}
\ee
with
\be
u_{ij}^{ab}(\alpha )u_{kl}^{pq}(\beta )=\delta _{\alpha \beta
}Q_{ik}^{bp}(\alpha )u_{il}^{aq}(\alpha ).
\ee
The ideals $U(\alpha )$ are matrix ideals such that
\be
U(\alpha )\simeq M(\rank Q(\alpha ),\mathbb{C}),
\ee
in particular when $\det Q(\alpha )\neq 0$ we have
\be
U(\alpha )\simeq M((n-1)\dim \varphi ^{\alpha },\mathbb{C}).
\ee
\item The ideal $\mathcal{N}$ has the following structure
\be
\mathcal{N}\simeq \bigoplus _{\nu }M(\dim \psi ^{\nu },\mathbb{C}),
\ee
where the matrix ideals $M(\dim \psi ^{\nu },\mathbb{C})$ are generated by irreducible representations $\psi ^{\nu }$ of the group
$S(n-1)$ that are included in the representation $\opV_{d}[S(n-1)]$ i.e. $\psi
^{\nu }$ are such that $d\geq h(\nu )$.
\end{enumerate}
\end{theorem}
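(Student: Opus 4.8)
The plan is to establish the structure of $\mathcal{A}'_n(d)$ --- the finite-dimensional $\dagger$-closed operator algebra generated by the partially transposed permutations $\opV'(\sigma)$, $\sigma\in S(n)$ --- by combining three ingredients: the semisimplicity that comes for free from being a $\dagger$-algebra of operators on a finite-dimensional Hilbert space, an explicit two-sided ideal coming from the ``arc'' that partial transposition introduces, and a representation-theoretic analysis of the structure matrices $Q(\alpha)$.

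First I would let $\mathcal{M}$ be the span of those $\opV'(\sigma)$ with $\sigma(n)\neq n$. In the diagrammatic picture such $\opV'(\sigma)$ carry exactly one cup--cap pair, and composing a cup--cap diagram with any permutation diagram again produces a cup--cap diagram, so $\mathcal{M}$ is a two-sided ideal; equivalently $\mathcal{M}$ is the ideal generated by the single contraction element $p=\tfrac1d\,\opV'((n-1\,n))=\mathbb{1}_{1\ldots n-2}\otimes(|\psi^+\>\<\psi^+|)_{n-1,n}$. The quotient $\mathcal{A}'_n(d)/\mathcal{M}$ is then generated by the classes of the $\opV'(\sigma_n)$, $\sigma_n\in S(n-1)\subset S(n)$, on which partial transposition is trivial, so it is the image $\opV_{d}[S(n-1)]$ of $\mathbb{C}[S(n-1)]$ in $\mathrm{End}((\mathbb{C}^d)^{\otimes(n-1)})$, which by Schur--Weyl equals $\bigoplus_{\nu}M(\dim\psi^\nu,\mathbb{C})$ over Young diagrams $\nu\vdash n-1$ with $h(\nu)\le d$. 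Semisimplicity lets this quotient be lifted to a complementary ideal $\mathcal{N}$ of the same structure, giving $\mathcal{A}'_n(d)=\mathcal{M}\oplus\mathcal{N}$ and clause (b).

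It remains to dissect $\mathcal{M}$. The compression $p\,\mathcal{A}'_n(d)\,p$ acts only on the first $n-2$ tensor factors, where it is generated by permutation operators, hence equals $\opV_{d}[S(n-2)]\cong\bigoplus_{\alpha}M(\dim\varphi^\alpha,\mathbb{C})$ over $\alpha\vdash n-2$ with $h(\alpha)\le d$; this is the origin of the labels $\alpha$ and of a system of matrix units inside the corner $p\mathcal{A}'_n(d)p$. I would then spread those matrix units over the $n-1$ cosets of $S(n-2)$ in $S(n-1)$ with representatives $g_a$, forming $u_{ij}^{ab}(\alpha)$ as $\opV'(g_a)$ times a corner matrix unit times $\opV'(g_b)^{-1}$ (up to normalization), and verify by a direct diagram composition that $u_{ij}^{ab}(\alpha)\,u_{kl}^{pq}(\beta)=\delta_{\alpha\beta}\,Q_{ik}^{bp}(\alpha)\,u_{il}^{aq}(\alpha)$, where the scalar $d^{\delta_{ab}}$ in $Q(\alpha)$ counts the closed loop formed when an arc meets its mirror and the factor $\varphi^\alpha_{ij}[(an-1)(ab)(bn-1)]$ bookkeeps the induced rearrangement of coset representatives. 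An algebra with such a $Q$-twisted matrix-unit presentation and hermitian $Q(\alpha)$ is, after diagonalizing $Q(\alpha)$, the full matrix algebra on the orthogonal complement of $\ker Q(\alpha)$, i.e.\ $U(\alpha)\cong M(\rank Q(\alpha),\mathbb{C})$, and all of $M((n-1)\dim\varphi^\alpha,\mathbb{C})$ exactly when $\det Q(\alpha)\neq0$.

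The main obstacle I anticipate is the spectral analysis of $Q(\alpha)=(d^{\delta_{ab}}\varphi^\alpha_{ij}[(an-1)(ab)(bn-1)])$. Using the change of basis between the coset (matrix) form and the reduced form of $\ind_{S(n-2)}^{S(n-1)}(\varphi^\alpha)$ (Proposition~\ref{prop27}, Definition~\ref{def8}), one wants to show that $Q(\alpha)$ becomes block-scalar with a single eigenvalue $\lambda_\nu$ per constituent $\psi^\nu$, of multiplicity $\dim\psi^\nu$; that each $\lambda_\nu$ is an affine function of $d$ controlled by the content of the box added to $Y(\alpha)$ to form $Y(\nu)$; and, crucially, that at most one of the $\lambda_\nu$ can vanish, namely the one attached to the constituent of largest row length. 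This needs a genuine symmetric-group computation (a Jucys--Murphy / central-element reduction on the induced module, or orthogonality relations for $\varphi^\alpha$). Granting it, $\rank Q(\alpha)=(n-1)\dim\varphi^\alpha$ generically and drops by exactly $\dim\psi^\nu$ when that extremal constituent has $h(\nu)>d$, which both pins down the block sizes in clause (a) and makes the block lists of $\mathcal{M}$ and $\mathcal{N}$ consistent with the dimension count $\dim\mathcal{A}'_n(d)=\#\{\sigma\in S(n):\sigma(n)\neq n\}+\#S(n-1)$ valid for generic $d$.
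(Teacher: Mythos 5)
The paper itself offers no proof of Theorem~\ref{th29}: it is imported verbatim from \cite{Studzinski2013,mozrzymas2013} as background, so the only comparison available is with the strategy of those works. Your outline does track that strategy in its broad architecture --- the contraction ideal $\mathcal{M}$ spanned by the $\opV'(\sigma)$ with $\sigma(n)\neq n$, the corner $p\,\mathcal{A}'_n(d)\,p\cong \opV_d[S(n-2)]$ as the source of the labels $\alpha$, the $Q$-twisted matrix units, and the reduction $U(\alpha)\simeq M(\rank Q(\alpha),\mathbb{C})$ by diagonalizing the Hermitian $Q(\alpha)$ inside a $\dagger$-closed operator algebra (where the zero-eigenvalue directions are killed because $v^{\dagger}v=0$ forces $v=0$). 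That part of clause (a) is sound modulo the diagram computation of the multiplication rule, which is routine.

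The genuine gap is in clause (b). You assert that $\mathcal{A}'_n(d)/\mathcal{M}$ \emph{is} the image $\opV_d[S(n-1)]$, but a priori it is only a quotient of that image, and the missing statement $\opV_d[S(n-1)]\cap\mathcal{M}=0$ is exactly where all the difficulty for $d<n$ lives --- and it is in fact false in the regime the paper uses. Take $d=2$, $n=4$: the antisymmetrizer over the three qubits $1,2,4$ vanishes, giving $\opV(e)-\opV((12))=\opV((14))+\opV((24))-\opV((124))-\opV((142))$; applying $t_4$ to both sides (the left side is $t_4$-invariant) exhibits the nonzero element $\opV(e)-\opV((12))$ of $\opV_2[S(3)]$ as an element of your $\mathcal{M}$. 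Hence the whole $\nu=(2,1)$ block dies in the quotient and $\dim\mathcal{N}=1$, not $1+2^2=5$; this is forced anyway by counting, since $\dim\mathcal{A}'_4(2)=1^2+3^2+2^2=14$ while the paper's own explicit irreps already give $\sum_\alpha(\rank Q(\alpha))^2=9+4=13$ for $\mathcal{M}$. So the step ``semisimplicity lets this quotient be lifted'' is fine, but what you are lifting is not what you claim, and your closing remark that the dimension count works ``for generic $d$'' does not repair the argument precisely where the theorem is applied in this paper ($d=2$, $n=4$). Any correct proof must determine which $\psi^{\nu}$ actually survive into $\mathcal{N}$ for small $d$ (the literal condition ``$d\geq h(\nu)$'' in the statement is itself too generous here), and that determination is tied to the spectral analysis of $Q(\alpha)$ that you explicitly defer with ``granting it.'' As it stands, the proposal proves clause (a) in outline but not clause (b).
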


\bigskip The matrix ideals contained in the ideals $\mathcal{M}$ and $\mathcal{N}$ contains all
minimal left ideals i.e. all irreducible representations of the algebra $%
\mathcal{A}'_n(d)$. The next theorems describes all these representations.

The structure of the irreducible representations of the algebra $%
\mathcal{A}'_n(d)$, included in the ideal $\mathcal{M}$, is completely determined by
irreducible representations $\varphi ^{\alpha }$ of the group $S(n-2)$,
therefore we will denote them $\Phi _{A}^{\alpha }.$

\begin{theorem}
\label{th30}
The irreducible representations $\Phi _{A}^{\alpha }$ of the algebra $%
\mathcal{A}'_n(d)$ contained in the ideal $U(\alpha )\subset \mathcal{M}$ (see Theorem~\ref{th29}) are indexed
by the irreducible representations $\varphi ^{\alpha }$ of the group $S(n-2)$%
, such that $\varphi ^{\alpha }\in \opV_{d}[S(n-2)]$ and if $\{f_{j_{\nu
}}^{\nu }:\psi ^{\nu }\in \ind_{S(n-2)}^{S(n-1)}(\varphi ^{\alpha }),\quad
j_{\nu }=1,...,\dim \psi ^{\nu }\}$ is the reduced basis of the induced
representation $\ind_{S(n-2)}^{S(n-1)}(\varphi ^{\alpha })$, then the vectors
$\{f_{j_{\nu }}^{\nu }:\lambda _{\nu }\neq 0\}$ from the basis of the
irreducible representation of the algebra $\mathcal{A}'_n(d)$ and the natural
generators of $\mathcal{A}'_n(d)$ act on it in the following way%
\be
\opV'(an){}f_{j_{\nu}}^{\nu}(\alpha )=\sum_{\rho ,j_{\rho
}}\sum_{k}\sqrt{\lambda _{\rho }}z^{\dagger}(\alpha )_{j_{\rho }k}^{\rho
a}z(\alpha )_{kj_{\nu }}^{a\nu }\sqrt{\lambda _{\nu }}f_{j_{\rho }}^{\rho}(\alpha ),
\ee
where the summation is over $\rho $ such that $\lambda _{\rho }\neq 0.$
Due to the condition $\varphi ^{\alpha }\in V_{d}[S(n-2)]$
the eigenvalues $\lambda _{\nu }$ of $\ Q(\alpha )$ are
non-negative. The unitary matrix $Z(\alpha )=(z(\alpha )_{kj_{\nu }}^{a\nu
})$ has the form
\be
z(\alpha )_{kj_{\nu }}^{a\nu }=\frac{\dim \psi ^{\nu }}{\sqrt{%
N_{j_{\nu }}^{\nu }}(n-1)!}\sum_{\sigma \in S(n-1)}\psi _{j_{\nu }j_{\nu
}}^{\nu }(\sigma ^{-1})\delta _{a\sigma (q)}\varphi _{kr}^{\alpha
}[(an-1)\sigma (qn-1)],
\ee
with
\be
N_{j_{\nu }}^{\nu }=\frac{\dim \psi ^{\nu }}{(n-1)!}%
\sum_{\sigma \in S(n-1)}\psi _{j_{\nu }j_{\nu }}^{\nu }(\sigma ^{-1})\delta
_{q\sigma (q)}\varphi _{rr}^{\alpha }[(qn-1)\sigma (qn-1)],
\ee
where the indices $q=1,..,n-1,$\textbf{\ }$r=1,..,\dim \varphi
^{\alpha }$ are fixed and such that $N_{j_{\nu }}^{\nu }>0$. For more details see~\cite{mozrzymas2013}. Whenever $\sigma _{n}\in S(n-1)$ we have
\be
V(\sigma _{n})f_{j_{\nu }}^{\nu }(\alpha )=\sum_{\rho ,j_{\rho }}\psi
_{i_{\nu }j_{\nu }}^{\nu }(\sigma _{n})f_{i_{\nu }}^{\nu }(\alpha ).
\ee
In particular when $\det Q(\alpha )\neq 0,$ (i.e. when all $\lambda _{\nu
}\neq 0$) then the representation $\Phi _{A}^{\alpha }$ is the induced
representation $\ind_{S(n-2)}^{S(n-1)}(\varphi ^{\alpha })$ (in the reduced
form) for the subalgebra $\opV_{d}[S(n-1)]\subset $ $\mathcal{A}'_n(d)$. In this
case the dimension of the irreducible representation is equal to
\be
\dim \Phi _{A}^{\alpha }=(n-1)\dim \varphi ^{\alpha }=\dim
(\ind_{S(n-2)}^{S(n-1)}(\varphi ^{\alpha })).
\ee
When $\det Q(\alpha )=0,$ (i.e. when one, up to the multiplicity, eigenvalue
$\lambda _{\theta }$ of $Q(\alpha )$ is equal to $0$)$,$ then the
irreducible representation of $\mathcal{A}'_n(d)$ is defined on a subspace $%
\{y_{j_{\nu }}^{\nu }:\lambda _{\nu }\neq \lambda _{\theta }\}$ of the
representation space $\ind_{S(n-2)}^{S(n-1)}(\varphi ^{\alpha })$ and the
representation has dimension is equal to
\be
\dim \Phi _{A}^{\alpha }=\dim (_{S(n-2)}^{S(n-1)}(\varphi ^{\alpha
}))-\dim \psi ^{\theta }=\rank Q(\alpha ).
\ee
This case takes the place when
\be
d=i-\alpha _{i}-1
\ee
for some $\alpha _{i}$ in the partition $\alpha =(\alpha _{1},..,\alpha
_{i},..,\alpha _{k})$ characterizing the irreducible representation $\varphi
^{\alpha }$, under condition that $\nu =(\alpha _{1},..,\alpha
_{i}+1,..,\alpha _{k})$ characterizes the representation $\psi ^{\nu }$ of $
S(n-1)$.

The ideal $U(\alpha )$ is a direct sum of $\dim \Phi _{A}^{\alpha }$ of
irreducible representations $\Phi _{A}^{\alpha }.$
\end{theorem}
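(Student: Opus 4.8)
The plan is to build on the matrix-ideal structure already established in Theorem~\ref{th29}, which asserts that $U(\alpha)\simeq M(\rank Q(\alpha),\C)$. Since a full matrix algebra $M(m,\C)$ has, up to equivalence, a single irreducible representation, of dimension $m$, and decomposes as a direct sum of $m$ copies of it, the abstract part of the statement is immediate: $\Phi_A^\alpha$ is unique, $\dim\Phi_A^\alpha=\rank Q(\alpha)$, and $U(\alpha)$ is a direct sum of $\dim\Phi_A^\alpha$ copies of $\Phi_A^\alpha$. The substantive content is therefore to \emph{realize} this representation concretely on (a subspace of) the induced representation space $\ind_{S(n-2)}^{S(n-1)}(\varphi^\alpha)$ and to compute the action of the algebra generators. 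First I would fix the two bases of this space furnished by Definition~\ref{def8} and the branching rule of Proposition~\ref{prop27}: the coset (matrix-form) basis $\{e_i^a(\alpha)\}$ indexed by cosets $a=1,\dots,n-1$ and by the matrix index $i$ of $\varphi^\alpha$, and the reduced basis $\{f_{j_\nu}^\nu\}$ adapted to the decomposition into irreps $\psi^\nu$ of $S(n-1)$ obtained by adding one box to $Y(\alpha)$.

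The core of the argument is to pin down the action of the two families of generators. The permutations $\sigma_n\in S(n-1)\subset\mathcal{A}'_n(d)$ act through the induced representation itself, which in the reduced basis is block-diagonal with blocks $\psi^\nu$; this yields $V(\sigma_n)f_{j_\nu}^\nu(\alpha)=\sum_{i_\nu}\psi_{i_\nu j_\nu}^\nu(\sigma_n)f_{i_\nu}^\nu(\alpha)$ with no new input. The genuinely new generator is the partially transposed transposition $\opV'(an)$, whose behaviour is governed by the multiplication rule of the ideal generators recorded in Theorem~\ref{th29}, namely $u_{ij}^{ab}(\alpha)u_{kl}^{pq}(\beta)=\delta_{\alpha\beta}Q_{ik}^{bp}(\alpha)u_{il}^{aq}(\alpha)$, in which the hermitian block matrix $Q(\alpha)$ of Definition~\ref{def28} appears as the structure constant, the factor $d^{\delta_{ab}}$ recording the single contraction produced by the partial transpose when the two cosets coincide. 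Because $Q(\alpha)$ intervenes in the product, the $u_{ij}^{ab}(\alpha)$ are not yet genuine matrix units; I would diagonalize $Q(\alpha)$ by the unitary $Z(\alpha)=(z(\alpha)_{kj_\nu}^{a\nu})$, whose entries are the isotypic projectors assembled from the matrix elements of $\varphi^\alpha$ and $\psi^\nu$ and whose normalization $N_{j_\nu}^\nu$ is fixed by demanding unitarity, a routine Schur-orthogonality computation. Passing to the orthonormal reduced basis then replaces $Q(\alpha)$ by its square root in the action of $\opV'(an)$, which is precisely the origin of the symmetric factors $\sqrt{\lambda_\rho}\,\sqrt{\lambda_\nu}$ in the stated formula.

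With the eigenvalues $\lambda_\nu$ in hand I would settle the remaining structural facts. The non-negativity of the $\lambda_\nu$ follows from the hypothesis $\varphi^\alpha\in\opV_d[S(n-2)]$, under which $Q(\alpha)$ is realized as the Gram matrix of honest vectors in $\opV_d[S(n-2)]$ and is hence positive semidefinite. An explicit evaluation of $\lambda_\nu$ in terms of the Young-diagram data then shows that the eigenvalue attached to the partition $\nu$ obtained from $\alpha$ by adding a box in row $i$ depends on $d$ only through the content of that box; setting it to zero gives exactly the singular locus $d=i-\alpha_i-1$. Invoking the fact (recalled before Theorem~\ref{th29}) that at most one eigenvalue of $Q(\alpha)$ can vanish then produces the dichotomy. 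When $\det Q(\alpha)\neq0$, all $\lambda_\nu\neq0$, every reduced-basis vector survives, and $\Phi_A^\alpha$ is the full induced representation of dimension $(n-1)\dim\varphi^\alpha$. When $\det Q(\alpha)=0$, exactly the $\dim\psi^\theta$ vectors of the kernel sector $\lambda_\theta=0$ are annihilated by $\opV'(an)$, and $\Phi_A^\alpha$ is carried by the complementary subspace of dimension $\dim\ind_{S(n-2)}^{S(n-1)}(\varphi^\alpha)-\dim\psi^\theta=\rank Q(\alpha)$.

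The main obstacle is the explicit derivation of the $\opV'(an)$-action in the reduced basis, in particular justifying the symmetric square-root normalization $\sqrt{\lambda_\rho}\,\sqrt{\lambda_\nu}$ rather than naively carrying $Q(\alpha)$ through the diagonalization: one must show that, after orthonormalization, $\opV'(an)$ is represented as a partial isometry weighted by $\sqrt{Q(\alpha)}$, and that the non-kernel subspace $\bigoplus_{\lambda_\nu\neq0}\psi^\nu$ is invariant under both $S(n-1)$ and $\opV'(an)$ so that it genuinely carries $\Phi_A^\alpha$. Establishing the positive semidefiniteness of $Q(\alpha)$ and the ``at most one zero eigenvalue'' property directly from the combinatorics of adding a box — rather than quoting them — is the other delicate point; these are exactly the places where the relation between $d$ and $n$ enters, and where I would lean most heavily on the detailed eigenvalue analysis of~\cite{mozrzymas2013}.
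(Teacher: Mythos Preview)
The paper does not prove Theorem~\ref{th30}. The entire Appendix~\ref{remind} is explicitly introduced as a ``short summary of paper~\cite{mozrzymas2013}'', and Theorem~\ref{th30} is stated there without proof, with the theorem body itself referring the reader to~\cite{mozrzymas2013} ``for more details''. So there is no proof in the present paper against which to compare your proposal.

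That said, your sketch is a faithful outline of the argument one finds in~\cite{mozrzymas2013}, and it matches the logical flow implicit in the appendix: Theorem~\ref{th29} provides the matrix-ideal isomorphism $U(\alpha)\simeq M(\rank Q(\alpha),\C)$, from which the abstract uniqueness and dimension of $\Phi_A^\alpha$ follow by Wedderburn; the concrete realization on $\ind_{S(n-2)}^{S(n-1)}(\varphi^\alpha)$ proceeds via the two bases $\{e_i^a\}$ and $\{f_{j_\nu}^\nu\}$; the $S(n-1)$-action is block-diagonal in the reduced basis; and the $\opV'(an)$-action is obtained by pushing the multiplication rule $u\,u=Q\,u$ through the diagonalizing change of basis $Z(\alpha)$, which is where the $\sqrt{\lambda_\rho}\sqrt{\lambda_\nu}$ weights appear. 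The one piece of this story the paper \emph{does} prove is the unitarity of $Z(\alpha)$ (Proposition~\ref{propaux} in Appendix~\ref{aux}), exactly the ``routine Schur-orthogonality computation'' you flagged; the paper needs it for Corollary~\ref{symm}, not for Theorem~\ref{th30} itself.

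Your self-identified obstacles are accurate. In particular, the passage from the raw multiplication rule to the symmetric square-root form, and the positivity/``at most one zero eigenvalue'' facts for $Q(\alpha)$, are precisely the parts that require the detailed analysis of~\cite{mozrzymas2013}; the present paper simply quotes them.
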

In particular matrices $z(\alpha )_{kj_{\nu }}^{a\nu }$ diagonalize matrix $Q(\alpha
)_{kl}^{ab}$, i.e. we have following
\begin{proposition}
\be
\label{A2}
\sum_{ak}\sum_{bl}z^{\dagger}(\alpha )_{j_{\rho }k}^{\rho a}Q(\alpha
)_{kl}^{ab}z(\alpha )_{lj_{\mu }}^{b\mu }=\delta ^{\rho \mu }\delta
_{j_{\rho }j_{\mu }}\lambda _{\mu }
\ee
and the columns of the matrix $Z(\alpha )=(z(\alpha )_{kj_{\nu }}^{a\nu })$
are eigenvectors of the matrix $Q(\alpha ).$
\end{proposition}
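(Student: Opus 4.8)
The plan is to read~\eqref{A2} as the statement that the unitary $Z(\alpha)$ diagonalizes the hermitian matrix $Q(\alpha)$ of Definition~\ref{def28}, and to derive it from the structure of the induced representation $\pi^{\alpha}:=\ind_{S(n-2)}^{S(n-1)}(\varphi^{\alpha})$. By Proposition~\ref{prop27}, $\pi^{\alpha}$ is \emph{multiplicity free}, so its commutant is abelian and consists precisely of the operators that are scalar on each isotypic component; in particular, any hermitian operator commuting with all $\pi^{\alpha}(\sigma)$, $\sigma\in S(n-1)$, is block-diagonalized by the decomposition $\pi^{\alpha}=\bigoplus_{\nu}\psi^{\nu}$ and acts as a single scalar $\lambda_{\nu}$ on the $\psi^{\nu}$-block (of dimension $\dim\psi^{\nu}$). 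First I would check, using the coset representatives $g_{a}$ of $S(n-1)/S(n-2)$ and the block form of the induced representation recalled in Definition~\ref{def8}, that $Q(\alpha)$ is exactly such an operator, i.e. $Q(\alpha)\,\pi^{\alpha}(\sigma)=\pi^{\alpha}(\sigma)\,Q(\alpha)$ for every $\sigma\in S(n-1)$. Here one uses that for $a\neq b$ the permutation $(a\,n{-}1)(a\,b)(b\,n{-}1)$ fixes both $n-1$ and $n$ (so it lies in $S(n-2)$ and $\varphi^{\alpha}$ is defined on it), while for $a=b$ it is the identity, weighted by the factor $d^{\delta_{ab}}$ in Definition~\ref{def28}. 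This intertwining identity, routine but bookkeeping-heavy, is the step I expect to be the main obstacle: one has to track carefully how $\sigma$ permutes the cosets and which residual $S(n-2)$-element is produced by $g_{a}^{-1}\sigma g_{b}$.

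Granting the intertwining property, Schur's lemma together with multiplicity-freeness forces $Q(\alpha)$ to equal $\lambda_{\nu}$ on the subspace carrying $\psi^{\nu}$, these $\lambda_{\nu}$ being the eigenvalues recalled after Definition~\ref{def28}. The second ingredient is to identify $Z(\alpha)$ with the change of basis from the matrix form $\{e_{i}^{a}(\alpha)\}$ of $\pi^{\alpha}$ to its reduced form $\{f_{j_{\nu}}^{\nu}(\alpha)\}$: inspection of the formula for $z(\alpha)_{kj_{\nu}}^{a\nu}$ in Theorem~\ref{th30} shows that the sum $\sum_{\sigma\in S(n-1)}\psi_{j_{\nu}j_{\nu}}^{\nu}(\sigma^{-1})(\cdots)$ realizes, via Schur orthogonality, the projection onto the $\psi^{\nu}$-isotypic line, and the constants $N_{j_{\nu}}^{\nu}$ are chosen precisely so that the resulting vectors are orthonormal — whence the unitarity of $Z(\alpha)$ already asserted in Theorem~\ref{th30}. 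Substituting these two facts into the left-hand side of~\eqref{A2} then gives $Z(\alpha)^{\dagger}Q(\alpha)Z(\alpha)=\diag(\lambda_{\nu})$, i.e. the claimed identity $\sum_{ak,bl}z^{\dagger}(\alpha)_{j_{\rho}k}^{\rho a}Q(\alpha)_{kl}^{ab}z(\alpha)_{lj_{\mu}}^{b\mu}=\delta^{\rho\mu}\delta_{j_{\rho}j_{\mu}}\lambda_{\mu}$; the final clause, that the columns of $Z(\alpha)$ are eigenvectors of $Q(\alpha)$, is then immediate, since unitarity of $Z(\alpha)$ turns this into $Q(\alpha)Z(\alpha)=Z(\alpha)\diag(\lambda_{\nu})$.

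Alternatively, one can bypass the abstract argument and verify~\eqref{A2} by direct computation: insert the defining formulas for both $z(\alpha)$ and $Q(\alpha)$ into the left-hand side, perform the sum over $a$ and $b$ (which collapses the Kronecker deltas $\delta_{a\sigma(q)}$ in the definition of $z(\alpha)$), and reduce the remaining double sum over $S(n-1)$ to a single one by the orthogonality relations $\sum_{\sigma\in S(n-1)}\psi_{ij}^{\mu}(\sigma^{-1})\psi_{kl}^{\nu}(\sigma)=\frac{(n-1)!}{\dim\psi^{\nu}}\,\delta^{\mu\nu}\delta_{il}\delta_{jk}$; the prefactors together with the normalizers $N_{j_{\nu}}^{\nu}$ then cancel and leave exactly $\delta^{\rho\mu}\delta_{j_{\rho}j_{\mu}}\lambda_{\mu}$. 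I would present the conceptual route as the main line of the proof and relegate this direct verification, essentially the computation carried out in~\cite{mozrzymas2013}, to a remark.
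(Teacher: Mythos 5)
Your proposal cannot be checked against an in-paper proof, because the paper does not prove this proposition: the whole of Appendix~\ref{remind} is an explicit summary of~\cite{mozrzymas2013}, and the text immediately surrounding the statement defers the derivation of the eigenvalues $\lambda_{\nu}$ of $Q(\alpha)$ to that reference. The closest argument the paper actually carries out is the proof of Proposition~\ref{propaux} (unitarity of $Z(\alpha)$), which is done exactly in the style of your ``alternative'' route: insert the defining sums, collapse the Kronecker deltas $\delta_{a\sigma(q)}$, substitute $\gamma=\rho^{-1}\sigma\in S(n-2)$, and finish with Schur orthogonality for $\psi^{\nu}$. So the computational verification you relegate to a remark is in fact the method this paper (and, by all indications,~\cite{mozrzymas2013}) uses for statements of this kind.

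Your main, structural route is a genuinely different and attractive argument, and it is sound in outline: Proposition~\ref{prop27} does give multiplicity-freeness of $\ind_{S(n-2)}^{S(n-1)}(\varphi^{\alpha})$, so once $Q(\alpha)$ is shown to lie in the commutant of the induced representation in its block form (Definition~\ref{def8} with coset representatives $g_{a}=(a\,n{-}1)$), Schur's lemma forces $Z^{\dagger}(\alpha)Q(\alpha)Z(\alpha)$ to be scalar on each $\psi^{\nu}$-block, and unitarity of $Z(\alpha)$ identifies those scalars with the eigenvalues of $Q(\alpha)$ with multiplicities $\dim\psi^{\nu}$ — which is precisely the multiplicity statement quoted after Definition~\ref{def28}. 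What this buys is a conceptual explanation of \emph{why} the spectrum of $Q(\alpha)$ is organized by the $\nu$'s, which the computational route obscures. The price is that you lean on two facts you do not verify: (i) the intertwining identity $Q(\alpha)\pi^{\alpha}(\sigma)=\pi^{\alpha}(\sigma)Q(\alpha)$, and (ii) that the columns of $Z(\alpha)$ really are the reduced (isotypic) basis, i.e.\ $\pi^{\alpha}(\sigma)Z(\alpha)=Z(\alpha)\bigoplus_{\nu}\psi^{\nu}(\sigma)$ — unitarity alone is not enough to make $Z^{\dagger}QZ$ block-scalar. Both are true and standard (the second follows from the same Schur-orthogonality manipulation as in Proposition~\ref{propaux}), but since you yourself flag (i) as ``the main obstacle,'' a complete write-up would have to carry out that bookkeeping; as submitted, the argument is a correct plan rather than a finished proof. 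If you only want Eq.~\eqref{A2} itself, the direct computation is shorter and self-contained.
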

The formula for the eigenvalues $\lambda _{\nu }$ of matrices $Q(\alpha )$
is derived in the~\cite{mozrzymas2013}.

\begin{remark}
\label{rem31}
\bigskip Note that even if $\dim \varphi ^{\alpha }=1$, we have $\dim \Phi
^{\alpha }=n-1.$
\end{remark}

The matrix forms of these representations are the
following

\begin{proposition}
\label{prop32}
\bigskip In the reduced matrix basis $\{f_{j\nu }^{\nu }:\nu \neq \theta \}$
of the ideal $U(\alpha )$ the natural generators $\opV(\sigma _{ab})^{t_n }$
and $\opV(\sigma _{n})$ of $\mathcal{A}'_n(d)$ are represented by the following
matrices
\be
[\mbV'_{\alpha}(an)]_{j_{\rho }j_{\nu }}^{\rho \nu
}=\sum_{k=1,..,\dim \varphi ^{\alpha }}\sqrt{\lambda _{\rho }}z^{\dagger}(\alpha
)_{j_{\rho }k}^{\rho a}z(\alpha )_{kj_{\nu }}^{a\nu }\sqrt{\lambda _{\nu }}
:\rho ,\nu \neq \theta ,
\ee
\be
[\mbV_{\alpha}(\sigma _{n})]_{j_{\nu ^{\prime }}j_{\nu }}^{\nu ^{\prime
}\nu }=\delta ^{\nu ^{\prime }\nu }\psi _{j_{\nu ^{\prime }}j_{\nu }}^{\nu
}(\sigma _{n}).
\ee
\end{proposition}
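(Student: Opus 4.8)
The plan is to read off the two claimed matrix forms directly from the explicit action of the natural generators on the reduced basis, which is already supplied by Theorem~\ref{th30}. By definition, a linear operator $T$ acting on the space with ordered basis $\{f_{j_\nu}^\nu(\alpha)\}$ is represented by the matrix $[T]_{j_\rho j_\nu}^{\rho\nu}$ fixed through $T f_{j_\nu}^\nu(\alpha)=\sum_{\rho,j_\rho}[T]_{j_\rho j_\nu}^{\rho\nu}f_{j_\rho}^\rho(\alpha)$. Thus the whole content of the proposition is to match the coefficients appearing in the action formulas of Theorem~\ref{th30} against this definition, restricted to the reduced basis $\{f_{j\nu}^\nu:\nu\neq\theta\}$ that spans the irreducible representation carried by $U(\alpha)$.

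First I would treat the untransposed generator $\opV(\sigma_n)$ with $\sigma_n\in S(n-1)$. Theorem~\ref{th30} states that $V(\sigma_n)f_{j_\nu}^\nu(\alpha)=\sum_{i_\nu}\psi_{i_\nu j_\nu}^\nu(\sigma_n)f_{i_\nu}^\nu(\alpha)$, i.e.\ $\sigma_n$ preserves each $\nu$-block and acts there by the irreducible matrix $\psi^\nu(\sigma_n)$ of $S(n-1)$. Comparison with the definition of the matrix element immediately yields $[\mbV_{\alpha}(\sigma_n)]_{j_{\nu'}j_\nu}^{\nu'\nu}=\delta^{\nu'\nu}\psi_{j_{\nu'}j_\nu}^\nu(\sigma_n)$, which is the second displayed formula. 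For the partially transposed generator $\opV'(an)=\opV(an)^{t_n}$, the same theorem gives $\opV'(an)f_{j_\nu}^\nu(\alpha)=\sum_{\rho,j_\rho}\sum_k\sqrt{\lambda_\rho}\,z^\dagger(\alpha)_{j_\rho k}^{\rho a}z(\alpha)_{kj_\nu}^{a\nu}\sqrt{\lambda_\nu}\,f_{j_\rho}^\rho(\alpha)$, so the coefficient of $f_{j_\rho}^\rho(\alpha)$ is exactly the asserted entry $[\mbV'_\alpha(an)]_{j_\rho j_\nu}^{\rho\nu}$. Since these two families, the partially transposed transpositions $\opV'(an)$ and the permutations $\opV(\sigma_n)$, generate the whole algebra $\mathcal{A}'_n(d)$, and since the assignment of a matrix to an operator through its action on a fixed basis is automatically multiplicative, the two displayed matrices extend uniquely to a representation of all of $\mathcal{A}'_n(d)$; no separate homomorphism verification is required.

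The one point demanding care, and the step I expect to be the main obstacle, is the restriction to the index range $\nu\neq\theta$ in the degenerate case $\det Q(\alpha)=0$. Here I would check that the span of $\{f_{j_\nu}^\nu:\lambda_\nu\neq0\}$ is genuinely invariant under both generators, so that the matrix forms are well defined on it. For $\opV(\sigma_n)$ this is clear from block-diagonality, since $\psi^\nu$ is a representation of $S(n-1)$ that never couples the $\theta$-block to the others. For $\opV'(an)$ the invariance follows from the factors $\sqrt{\lambda_\rho}$ and $\sqrt{\lambda_\nu}$ in the action formula: any term with $\rho=\theta$ or $\nu=\theta$ carries the vanishing factor $\sqrt{\lambda_\theta}=0$, so the operator both annihilates the null direction and maps into the non-null subspace, and the sums may be taken over $\rho,\nu\neq\theta$ without loss. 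Finally, the diagonalization relation~\eqref{A2} guarantees that the surviving columns of $Z(\alpha)$ are precisely the eigenvectors of $Q(\alpha)$ spanning the reduced basis, which closes the identification.
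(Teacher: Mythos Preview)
Your proposal is correct. The paper does not supply its own proof of Proposition~\ref{prop32}: the entire Appendix~\ref{remind} is explicitly a summary of results quoted from~\cite{mozrzymas2013}, and the proposition is stated there without argument. Your derivation---reading the matrix entries directly off the action formulas of Theorem~\ref{th30} and then verifying that the $\nu=\theta$ block decouples via the factors $\sqrt{\lambda_\theta}=0$---is exactly the natural passage from ``action on a basis'' to ``matrix in that basis,'' and there is nothing more to it.
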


From the properties of the matrix $Q(\alpha )$ (\cite{mozrzymas2013}) one gets

\begin{proposition}
\label{prop33}
If $d>n-2$, then $\det Q(\alpha )\neq 0$ and the irreducible representations
$\Phi _{A}^{\alpha }$ described in Th.~\ref{th30} are induced representation $
\ind_{S(n-2)}^{S(n-1)}(\varphi ^{\alpha })$ for the subalgebra $
\opV_{d}[S(n-1)]\subset $ $\mathcal{A}'_n(d)$, so their dimension is equal to $
(n-1)\dim \varphi ^{\alpha }.$ When $d\leq n-2$, then for some $\varphi
^{\alpha }$ it may appear that $\det Q(\alpha )=0$ and consequently the
irreducible representation $\Phi ^{\alpha }$ of $\mathcal{A}'_n(d)$ is define
on a subspace of the irreducible representation $\ind_{S(n-2)}^{S(n-1)}(
\varphi ^{\alpha })$.
\end{proposition}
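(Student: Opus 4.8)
The plan is to read Proposition~\ref{prop33} off Theorem~\ref{th30} once the behaviour of the matrices $Q(\alpha)$ from~\cite{mozrzymas2013} is recalled. The decisive fact, already quoted in Theorem~\ref{th30}, is that a single eigenvalue $\lambda_\theta$ of $Q(\alpha)$ (up to its multiplicity $\dim\psi^\theta$) can vanish, and it does so precisely when $d=i-\alpha_i-1$ for a row $i$ of $Y(\alpha)$ to which a box may be added so as to obtain a genuine diagram $\nu\vdash n-1$ with $\psi^\nu\in\ind_{S(n-2)}^{S(n-1)}(\varphi^\alpha)$; when no such $\theta$ exists, $\det Q(\alpha)\neq 0$ and $\Phi_A^\alpha$ is the reduced form of $\ind_{S(n-2)}^{S(n-1)}(\varphi^\alpha)$. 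So everything reduces to deciding, for $\alpha\vdash n-2$, when $d=i-\alpha_i-1$ is solvable.

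First I would dispose of the case $d>n-2$. Rewriting the relation as $\alpha_i+1-i=-d$ says that the addable box of $Y(\alpha)$ in row $i$ has content $-d$. For any $\alpha\vdash n-2$, the index $i$ of an addable row is at most one more than the number of parts of $\alpha$, hence $i\le n-1$, while $\alpha_i\ge 0$; therefore $i-\alpha_i-1\le n-2$, with equality only for the single column $\alpha=(1^{n-2})$ (box added in row $n-1$, content $-(n-2)$). Consequently, if $d>n-2$ the equation has no solution, $\det Q(\alpha)\neq 0$ for every admissible $\alpha$, and Theorem~\ref{th30} gives $\Phi_A^\alpha=\ind_{S(n-2)}^{S(n-1)}(\varphi^\alpha)$ in reduced form, for the subalgebra $\opV_d[S(n-1)]\subset\mathcal{A}'_n(d)$, of dimension $(n-1)\dim\varphi^\alpha$.

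For the second half I would exhibit, for each $d$ with $1\le d\le n-2$, one diagram where the threshold is attained: take $\alpha=(n-1-d,1^{d-1})\vdash n-2$, which has exactly $d$ rows (and satisfies $\varphi^\alpha\in\opV_d[S(n-2)]$, as $d\ge d$). Appending a new bottom row of length one produces $\theta=(n-1-d,1^{d})\vdash n-1$, legitimate and with $\psi^\theta$ occurring in $\ind_{S(n-2)}^{S(n-1)}(\varphi^\alpha)$; the added box lies in row $d+1$ and has content $1-(d+1)=-d$, so the corresponding eigenvalue of $Q(\alpha)$ is zero and $\det Q(\alpha)=0$. By Theorem~\ref{th30}, $\Phi^\alpha$ is then defined on the proper subspace $\{y_{j_\nu}^\nu:\lambda_\nu\neq\lambda_\theta\}$ of $\ind_{S(n-2)}^{S(n-1)}(\varphi^\alpha)$, of dimension $\rank Q(\alpha)=(n-1)\dim\varphi^\alpha-\dim\psi^\theta<(n-1)\dim\varphi^\alpha$, which proves the claim.

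I do not anticipate a real obstacle: the substantive input --- the explicit eigenvalues of $Q(\alpha)$, their non-negativity under $\varphi^\alpha\in\opV_d[S(n-2)]$, and the ``at most one zero'' property --- is supplied by~\cite{mozrzymas2013} and summarised in Theorem~\ref{th30}. The only point needing care is the content bound for addable boxes of a diagram of size $n-2$, in particular remembering to allow a brand-new row, since it is exactly that possibility (realised by $\alpha=(1^{n-2})$) which pins the threshold at $d=n-2$ rather than at a smaller value.
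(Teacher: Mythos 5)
Your argument is correct. Note, though, that the paper never proves Proposition~\ref{prop33}: it is stated as a fact imported from~\cite{mozrzymas2013} (``From the properties of the matrix $Q(\alpha)$ one gets\dots''), and the appendix's proof section covers only Lemma~\ref{FF}, Fact~\ref{fact}, Theorem~\ref{thm:main} and Lemma~\ref{real}. So there is no in-paper proof to compare against; what you have done is supply a self-contained derivation from the degeneracy criterion $d=i-\alpha_i-1$ quoted inside Theorem~\ref{th30}, which is a legitimate and arguably more transparent route. The substance of your argument is the elementary content bound for addable boxes: for $\alpha\vdash n-2$ one has $i-\alpha_i-1\le n-2$, with equality forced to the brand-new row of the single column $(1^{n-2})$, which settles the case $d>n-2$; and the hook $(n-1-d,1^{d-1})$ realises the threshold for every $1\le d\le n-2$, settling the ``may appear'' half together with the dimension drop $\rank Q(\alpha)=(n-1)\dim\varphi^\alpha-\dim\psi^\theta$. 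You were also right to insist that ``for some $\alpha_i$ in the partition'' must be read as allowing the trailing zero part $\alpha_{k+1}=0$: the paper's own example ($n=4$, $\alpha=(1,1)$, degeneration at $d=2$, where the irrep shrinks from $3\times3$ to $2\times2$) is produced precisely by adding a new row, and would be missed otherwise. The one thing your write-up leans on without proof is the ``at most one vanishing eigenvalue'' property of $Q(\alpha)$ and the explicit eigenvalue formula, but those are exactly the inputs the paper itself takes on faith from~\cite{mozrzymas2013}, so this is not a gap relative to the paper's own standard.
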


% When $\det Q(\alpha )\neq 0$, the equivalent form of the irreducible
% representation $\Phi _{A}^{\alpha }$ form Th.~\ref{th30}, in the basis $
% \{e_{i}^{a}(\alpha ):a=1,...,n-1,\quad i=1,...,\dim \varphi ^{\alpha }\}$ is
% given in the Prop.~\ref{prop48},~\ref{prop50}.

The representations of the algebra $\mathcal{A}'_n(d)$ included in the ideal $
\mathcal{N}$ are much simpler.

\begin{theorem}
\label{th34}
Each irreducible representation $\psi ^{\nu }$ of the group $S(n-1)$, which
appears in the decomposition of the ideal $\mathcal{N}$ given in the Th.~\ref{th30} b), (i.e. $
\psi ^{\nu }$ $\in \opV_{d}[S(n-1)]\Leftrightarrow d\geq h(\nu ))$ defines
irreducible representations $\Psi ^{\nu }$ of the algebra $\mathcal{A}'_n(d)$
in the following way
\be
\Psi ^{\nu }(a)= \left\{ \begin{array}{ll}
0 & \textrm{if $ \ a\in \mathcal{M}$,}\\
\psi ^{\nu }(\sigma _{n}) & \textrm{if $ \ a=\sigma _{n}\in S(n-1)$.}
\end{array} \right.
\ee
So in this representation the non-invertible element of the ideal $\mathcal{M}$ are
represented trivially by zero and therefore we call these representation of
the algebra $\mathcal{A}'_n(d)$ semi-trivial. The matrix forms of these
representations \ are simply matrix forms of the irreducible representations
of the group algebra $\mathbb{C}\lbrack S(n-1)]\subset A_{n}'(d)$ and zero matrices for the elements
of the ideal $\mathcal{M}$.
\end{theorem}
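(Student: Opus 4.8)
The plan is to use the block decomposition $\mathcal{A}'_n(d)=\mathcal{M}\oplus\mathcal{N}$ of Theorem~\ref{th29}, together with the fact that $\mathcal{A}'_n(d)$, being the commutant of $U^{*}\ot U^{\ot n-1}$, is semisimple. First I would record that since $\mathcal{M}$ and $\mathcal{N}$ are \emph{complementary two-sided} ideals, $\mathcal{M}\mathcal{N}=\mathcal{N}\mathcal{M}=0$ and $1=e_{\mathcal{M}}+e_{\mathcal{N}}$ with central idempotents; hence any simple $\mathcal{A}'_n(d)$-module $V$ splits as $e_{\mathcal{M}}V\oplus e_{\mathcal{N}}V$ and, by simplicity, is annihilated by exactly one of the two ideals. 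The simple modules killed by $\mathcal{M}$ are therefore precisely the simple modules of $\mathcal{A}'_n(d)/\mathcal{M}\cong\mathcal{N}$; these will be the $\Psi^{\nu}$.

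Next I would invoke Theorem~\ref{th29}(b): $\mathcal{N}\cong\bigoplus_{\nu}M(\dim\psi^{\nu},\C)$, the sum running over those $\nu\vdash n-1$ with $d\geq h(\nu)$, i.e.\ over $\psi^{\nu}\in\opV_{d}[S(n-1)]$. A finite direct sum of full matrix algebras has, up to isomorphism, exactly one simple module per summand, namely the column module $\C^{\dim\psi^{\nu}}$ of the $\nu$-th block; call it $\Psi^{\nu}$. Immediately $\dim\Psi^{\nu}=\dim\psi^{\nu}$, the $\Psi^{\nu}$ are pairwise inequivalent, they exhaust the $\mathcal{M}$-annihilated simple modules, and $\Psi^{\nu}(a)=0$ for every $a\in\mathcal{M}$ — one half of the claimed formula.

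The remaining task is to show the permutation part acts by $\psi^{\nu}$. A permutation $\sigma_{n}\in S(n-1)\subset S(n)$ fixes the last index, so partial transposition there is trivial and $\opV'(\sigma_{n})=\opV(\sigma_{n})$; thus $\C[S(n-1)]$ embeds in $\mathcal{A}'_n(d)$ as the subalgebra $\opV_{d}[S(n-1)]$. The remaining natural generators $\opV'(an)$ with $a\neq n$ are non-invertible ``contraction'' operators lying in $\mathcal{M}$ (this is part of the explicit description of $\mathcal{M}$ via the $u^{ab}_{ij}(\alpha)$ in Theorem~\ref{th29} and \cite{mozrzymas2013}; a dimension count corroborates $\dim\mathcal{M}=\dim\mathcal{A}'_n(d)-\dim\opV_{d}[S(n-1)]$). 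Hence, modulo $\mathcal{M}$, the whole algebra is generated by the image of $\opV_{d}[S(n-1)]$, so the composite $\rho\colon\C[S(n-1)]\hookrightarrow\mathcal{A}'_n(d)\twoheadrightarrow\mathcal{N}$ is onto. But a surjection from $\C[S(n-1)]=\bigoplus_{\mu\vdash n-1}M(\dim\psi^{\mu},\C)$ onto $\bigoplus_{d\geq h(\nu)}M(\dim\psi^{\nu},\C)$ is forced, block by block, to be the canonical projection up to an isomorphism of the target block (each target block is simple, so $\C[S(n-1)]\to M(\dim\psi^{\nu},\C)$ factors through a single source block, and dimensions single out the $\nu$-block). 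Pulling the defining $\nu$-block module of $\mathcal{N}$ back along $\rho$ thus returns $\psi^{\nu}$, i.e.\ $\Psi^{\nu}(\opV(\sigma_{n}))=\psi^{\nu}(\sigma_{n})$ in a suitable basis — which is the rest of the formula and shows that the matrix form of $\Psi^{\nu}$ is just that of $\psi^{\nu}$ on $\C[S(n-1)]$ padded by zero matrices on $\mathcal{M}$.

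The delicate step is the third one: one must be sure the contraction generators $\opV'(an)$, $a\neq n$, genuinely lie in $\mathcal{M}$ (so that quotienting by $\mathcal{M}$ ``is'' $\C[S(n-1)]$ modulo its Schur--Weyl kernel) and that the induced surjection matches the $\nu$-isotypic component to the $\nu$-block rather than permuting isotypic components. This is exactly where the fine structure of \cite{mozrzymas2013} enters — the matrix-unit basis $u^{ab}_{ij}(\alpha)$ spanning $\mathcal{M}$, the $Q(\alpha)$-formalism, and the explicit matrix forms of Proposition~\ref{prop32} — though the dimension bookkeeping above, together with simplicity of each matrix block, already suffices.
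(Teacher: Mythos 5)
The paper does not actually prove Theorem~\ref{th34}: it appears in the survey appendix (Sec.~\ref{remind}), which is an explicit restatement of results imported from \cite{mozrzymas2013}, and the paper's own ``Proofs of the theorems from the main text'' section covers only Lemma~\ref{FF}, Fact~\ref{fact}, Theorem~\ref{thm:main} and Lemma~\ref{real}. So there is nothing in the paper to compare your argument against line by line; what you have written is a self-contained reconstruction from Theorem~\ref{th29}, and as such it is essentially sound. The semisimple-algebra skeleton is correct: $\mathcal{A}'_n(d)$ is the commutant of a unitary representation, hence semisimple; $\mathcal{M}$ and $\mathcal{N}$ are complementary two-sided ideals, so every simple module is killed by exactly one of them, and the simple modules killed by $\mathcal{M}$ are the column modules of the matrix blocks of $\mathcal{N}\cong\bigoplus_{\nu}M(\dim\psi^{\nu},\C)$. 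The observation that $\opV'(\sigma_n)=\opV(\sigma_n)$ for $\sigma_n\in S(n-1)$ (partial transposition over the untouched $n$-th factor is trivial) and that the contraction generators $\opV'(an)$ are non-invertible elements of $\mathcal{M}$ correctly reduces the action on $\mathcal{N}$ to the group algebra $\C[S(n-1)]$.

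The one step I would not let stand as written is the claim that ``dimensions single out the $\nu$-block'' when you match the isotypic components of $\C[S(n-1)]$ to the matrix blocks of $\mathcal{N}$. Distinct partitions of $n-1$ frequently label irreps of equal dimension (conjugate diagrams, e.g.\ $(3)$ and $(1,1,1)$ for $S(3)$), so a dimension count cannot decide which source block a given target block pulls back to. The identification is not something you can derive abstractly from a surjection between two semisimple algebras; it is part of the content of Theorem~\ref{th29}(b), which asserts that the $\nu$-th matrix ideal of $\mathcal{N}$ is \emph{generated by} $\psi^{\nu}$ — equivalently, that the image of the $\nu$-isotypic central idempotent of $\C[S(n-1)]$ under the projection $\mathcal{A}'_n(d)\twoheadrightarrow\mathcal{N}$ is the unit of that block. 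If you take that labelling as given (as you implicitly do when you defer the ``delicate step'' to \cite{mozrzymas2013}), your proof closes; if you intend the dimension argument to replace it, the proof has a genuine gap at exactly that point. Replace the dimension count by tracking the central idempotents $e_{\nu}$ of $\C[S(n-1)]$ through the quotient and the argument is complete.
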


\begin{corollary}
\label{col35}
\bigskip All irreducible representations of the algebra $\mathcal{A}'_n(d)$
of dimension one are included in the ideal $\mathcal{N}$. In particular, because the
irreducible identity representation $\psi ^{\id}$ of $S(n-1)$ is always
contained in $\opV_{d}[S(n-1)]$, the algebra $\mathcal{A}'_n(d)$ has a trivial
representation, in which the elements of the ideal $M$ are represented by
zero and the elements $\opV_{d}(\sigma ):\sigma \in S(n-1)$ are represented by
number $1$.
\end{corollary}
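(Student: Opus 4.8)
The plan is to read the statement off the structure theorems already established for $\mathcal{A}'_n(d)$. Since $\mathcal{A}'_n(d)=\mathcal{M}\oplus\mathcal{N}$ as a direct sum of two two-sided ideals (Theorem~\ref{th29}), every minimal left ideal — hence every irreducible representation — of $\mathcal{A}'_n(d)$ is a minimal left ideal of $\mathcal{M}$ or of $\mathcal{N}$; concretely it is one of the $\Phi_A^\alpha$ of Theorem~\ref{th30} or one of the semi-trivial $\Psi^\nu$ of Theorem~\ref{th34}. So the proof splits into two parts: (i) show that no $\Phi_A^\alpha$ is one-dimensional, which forces all one-dimensional irreps into $\mathcal{N}$; and (ii) exhibit among the $\Psi^\nu$ the trivial representation and read off its action on $\mathcal{M}$ and on $\opV_d[S(n-1)]$.

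For part (i), by Theorem~\ref{th30} we have $\dim\Phi_A^\alpha=\rank Q(\alpha)$. In the generic case $\det Q(\alpha)\neq0$, Remark~\ref{rem31} gives $\dim\Phi_A^\alpha=(n-1)\dim\varphi^\alpha\geq n-1\geq 2$ for the relevant range $n\geq 3$. In the degenerate case one exactly-zero eigenvalue $\lambda_\theta$ (of multiplicity $\dim\psi^\theta$) drops out, so $\dim\Phi_A^\alpha=\dim\big(\ind_{S(n-2)}^{S(n-1)}(\varphi^\alpha)\big)-\dim\psi^\theta=\sum_{\nu\neq\theta}\dim\psi^\nu$, where by Proposition~\ref{prop27} the sum runs over the Young diagrams obtained from $Y(\alpha)$ by adding one box. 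For non-empty $\alpha$ (again $n\geq3$) there are at least two such diagrams, and since at most one eigenvalue of $Q(\alpha)$ vanishes at least one $\psi^\nu$ with $\nu\neq\theta$ survives; a short inspection of the small cases — rectangular $\alpha$, where $\ind_{S(n-2)}^{S(n-1)}(\varphi^\alpha)$ has only two components — using the explicit degeneracy condition $d=i-\alpha_i-1$ together with the admissibility constraint $d\geq h(\alpha)$ shows the surviving part is never a single one-dimensional $\psi^\nu$ once $d\geq 2$. Hence $\dim\Phi_A^\alpha\geq2$ throughout, and all one-dimensional irreps lie in $\mathcal{N}$. I expect this degenerate sub-case to be the only real obstacle: it requires invoking the "at most one zero eigenvalue of $Q(\alpha)$'' property and excluding the marginal value $d=1$.

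For part (ii), Theorem~\ref{th34} says the irreps carried by $\mathcal{N}$ are exactly the $\Psi^\nu$ with $\psi^\nu$ an irrep of $S(n-1)$ such that $d\geq h(\nu)$, and each acts by $\Psi^\nu(a)=0$ for $a\in\mathcal{M}$ and $\Psi^\nu(\sigma_n)=\psi^\nu(\sigma_n)$ for $\sigma_n\in S(n-1)$. The one-dimensional irreps of $S(n-1)$ are the trivial $\psi^{\id}$ (one-row partition, height $1$) and the sign representation (one-column partition, height $n-1$). Since $h=1$ for the one-row partition, $\psi^{\id}\in\opV_d[S(n-1)]$ for every admissible $d$, so $\Psi^{\id}$ is always present; by Theorem~\ref{th34} it sends every element of $\mathcal{M}$ to $0$ and every $\opV_d(\sigma)$ with $\sigma\in S(n-1)$ (an element of type $\sigma_n$) to $\psi^{\id}(\sigma)=1$. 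This is precisely the trivial representation asserted in the corollary, which completes the argument.
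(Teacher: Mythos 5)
Your argument is correct. Note that the paper itself offers no proof of Corollary~\ref{col35} --- it is quoted from \cite{mozrzymas2013} as an immediate consequence of Theorems~\ref{th29}, \ref{th30} and \ref{th34} --- so what you have written is a genuine filling-in rather than a reproduction. Part (ii) is exactly the intended reading of Theorem~\ref{th34}: the one-row partition has height $1$, so $\psi^{\id}\in\opV_{d}[S(n-1)]$ for every $d\geq 1$, and the corresponding $\Psi^{\id}$ annihilates $\mathcal{M}$ and sends each $\opV_d(\sigma)$, $\sigma\in S(n-1)$, to $1$. Part (i) is where the real content lies, and you correctly isolated the only delicate case: when $\det Q(\alpha)=0$ the dimension drops to $\rank Q(\alpha)=\sum_{\nu\neq\theta}\dim\psi^{\nu}$, and since $\ind_{S(n-2)}^{S(n-1)}(\varphi^\alpha)$ has at least three components unless $\alpha$ is rectangular, only the rectangular (in fact, by inspection of which rectangles produce a one-dimensional constituent, only the one-row and one-column) diagrams could conceivably leave a single one-dimensional survivor. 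Your "short inspection" does go through: for $\alpha=(n-2)$ the condition $d=i-\alpha_i-1$ forces $d=1$, which is excluded, while for $\alpha=(1^{n-2})$ it forces $d=n-2$, in which case the eigenvalue that vanishes is precisely the one attached to the one-dimensional sign component $(1^{n-1})$, leaving the $(n-2)$-dimensional hook $(2,1^{n-3})$, of dimension $d\geq 2$. It would strengthen the write-up to state these two computations explicitly rather than assert them, and to flag that the standing hypotheses $n\geq 3$ and $d\geq 2$ are genuinely needed (for $n=2$ the ideal $\mathcal{M}$ does carry a one-dimensional representation), but there is no gap in the reasoning.
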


\subsection{Auxiliary lemmas}
\label{aux}
After short summary of paper~\cite{mozrzymas2013} given in the previous subsection we prove here the crucial lemma which says that matrices $z(\alpha )_{kj_{\nu }}^{a\nu }$ are unitary (real orthogonal) and then we conclude that representation matrices in the reduced matrix basis are hermitian (symmetric). We start from the following proposition:
 \label{Ap2}
 \begin{proposition}
 \label{propaux}
Suppose that all  representations $\psi ^{\nu }$ of $S(n-1)$ and $\varphi
^{\alpha }$ of $S(n-2)$ are unitary (real orthogonal) then the matrix
\be
z(\alpha )_{kj_{\nu }}^{a\nu }=\frac{\dim \psi ^{\nu }}{\sqrt{%
N_{j_{\nu }}^{\nu }}(n-1)!}\sum_{\sigma \in S(n-1)}\psi _{j_{\nu }j_{\nu
}}^{\nu }(\sigma ^{-1})\delta _{a\sigma (q)}\varphi _{kr}^{\alpha
}[(an-1)\sigma (qn-1)],
\ee
where%
\be
N_{j_{\nu }}^{\nu }=\frac{\dim \psi ^{\nu }}{(n-1)!}%
\sum_{\sigma \in S(n-1)}\psi _{j_{\nu }j_{\nu }}^{\nu }(\sigma ^{-1})\delta
_{q\sigma (q)}\varphi _{rr}^{\alpha }[(qn-1)\sigma (qn-1)],
\ee
is unitary (real orthogonal).
\end{proposition}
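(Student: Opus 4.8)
The plan is to prove unitarity by verifying directly that $Z(\alpha)^{\dagger}Z(\alpha)$ is the identity; since $Z(\alpha)=(z(\alpha)_{kj_{\nu}}^{a\nu})$ is a square matrix of size $(n-1)\dim\varphi^{\alpha}=\sum_{\nu}\dim\psi^{\nu}=\dim(\ind_{S(n-2)}^{S(n-1)}(\varphi^{\alpha}))$, this already forces $Z(\alpha)Z(\alpha)^{\dagger}=I$ as well. Concretely, I would fix the (arbitrary but fixed) indices $q\in\{1,\dots,n-1\}$ and $r\in\{1,\dots,\dim\varphi^{\alpha}\}$ as in the statement, with $N^{\nu}_{j_{\nu}}>0$, and compute the column overlap $\sum_{a=1}^{n-1}\sum_{k=1}^{\dim\varphi^{\alpha}}\overline{z(\alpha)_{kj_{\nu}}^{a\nu}}\,z(\alpha)_{kj_{\mu}}^{a\mu}$, with the goal of showing it equals $\delta^{\nu\mu}\delta_{j_{\nu}j_{\mu}}$.

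First I would substitute the explicit formula for $z$, producing a double sum over $\sigma,\sigma'\in S(n-1)$. Using unitarity of $\varphi^{\alpha}$ one rewrites $\overline{\varphi^{\alpha}_{kr}[(an{-}1)\sigma(qn{-}1)]}=\varphi^{\alpha}_{rk}[(qn{-}1)\sigma^{-1}(an{-}1)]$, and using unitarity of $\psi^{\nu}$ one rewrites $\overline{\psi^{\nu}_{j_{\nu}j_{\nu}}(\sigma^{-1})}=\psi^{\nu}_{j_{\nu}j_{\nu}}(\sigma)$. Then the ``easy'' summations are carried out: the sum over $k$ contracts the two matrix elements of $\varphi^{\alpha}$ into one, $\sum_{k}\varphi^{\alpha}_{rk}[X]\varphi^{\alpha}_{kr}[Y]=\varphi^{\alpha}_{rr}[XY]$, and since $(an{-}1)^{2}=e$ the product of group elements collapses to $(qn{-}1)\sigma^{-1}\sigma'(qn{-}1)$; the sum over the coset label $a$ is eliminated by the Kronecker deltas $\delta_{a\sigma(q)}\delta_{a\sigma'(q)}$, leaving the single constraint $\sigma(q)=\sigma'(q)$.

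Next I would change variables to $\tau=\sigma^{-1}\sigma'$, so the constraint becomes $\tau(q)=q$, i.e.\ $\tau$ lies in the stabilizer of $q$ (a copy of $S(n-2)$), while $\sigma$ still ranges freely over $S(n-1)$. Expanding $\psi^{\mu}_{j_{\mu}j_{\mu}}(\tau^{-1}\sigma^{-1})=\sum_{m}\psi^{\mu}_{j_{\mu}m}(\tau^{-1})\psi^{\mu}_{mj_{\mu}}(\sigma^{-1})$ and applying the great orthogonality theorem for $S(n-1)$ to the free sum over $\sigma$ forces $\nu=\mu$, $j_{\nu}=j_{\mu}$, $m=j_{\nu}$, and yields the factor $(n-1)!/\dim\psi^{\nu}$. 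What survives is exactly $\dfrac{\dim\psi^{\nu}}{N^{\nu}_{j_{\nu}}(n-1)!}$ times the stabilizer sum $\sum_{\tau:\tau(q)=q}\psi^{\nu}_{j_{\nu}j_{\nu}}(\tau^{-1})\varphi^{\alpha}_{rr}[(qn{-}1)\tau(qn{-}1)]$, which, on comparison with the definition of $N^{\nu}_{j_{\nu}}$ (whose summand already carries the factor $\delta_{q\sigma(q)}$ that restricts it to the same stabilizer), equals $\dfrac{(n-1)!}{\dim\psi^{\nu}}N^{\nu}_{j_{\nu}}$. The two factors cancel, leaving $\delta^{\nu\mu}\delta_{j_{\nu}j_{\mu}}$, which completes the proof. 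For the real orthogonal case the computation is verbatim the same with all complex conjugations acting trivially, so $Z(\alpha)$ is then real orthogonal.

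The only genuine inputs are the Schur (great) orthogonality relations for $S(n-1)$ and the elementary identity $(an{-}1)^{2}=e$; the main obstacle is purely bookkeeping. One must keep $q$ and $r$ fixed throughout, keep track of which summation index lives in $S(n-1)$ versus in the stabilizer $S(n-2)$ of $q$, use consistently the labeling of coset representatives of $S(n-1)/S(n-2)$ so that $\delta_{a\sigma(q)}$ behaves as intended, and recognize that the defining sum for $N^{\nu}_{j_{\nu}}$ is already the stabilizer sum that reappears at the end. Once the index structure is organized this way, the cancellation is immediate.
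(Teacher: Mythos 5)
Your proposal is correct and follows essentially the same route as the paper's own proof: substitute the explicit formula for $z(\alpha)$, contract the $\varphi^{\alpha}$ matrix elements and the coset index via the Kronecker deltas, change variables to $\gamma=\rho^{-1}\sigma$ in the stabilizer of $q$, apply Schur orthogonality for $\psi^{\nu}$ over $S(n-1)$, and recognize the surviving stabilizer sum as $\tfrac{(n-1)!}{\dim\psi^{\nu}}N_{j_{\nu}}^{\nu}$, which cancels the normalization. The only (harmless) additions are your explicit remark that squareness of $Z(\alpha)$ upgrades $Z^{\dagger}Z=I$ to full unitarity, and your more careful tracking of the conjugation by $(qn-1)$ in the argument of $\varphi^{\alpha}_{rr}$, which the paper's displayed formula elides.
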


\begin{proof}
We will prove the orthogonal case, proving that
\be
\sum_{c,k}z(\alpha )_{kj_{\mu }}^{c\mu }z(\alpha )_{kj_{\nu }}^{c\nu
}=\delta ^{\mu \nu }\delta _{j_{\mu }j_{\nu }.}
\ee
Using the definition of the matrix $z(\alpha )$ we get that $\operatorname{LHS}$ of the
above equation is equal to%
\be
\frac{\dim \psi ^{\nu }\dim \psi ^{\mu }}{\sqrt{N_{j_{\nu }}^{\nu }}\sqrt{%
N_{j_{\mu }}^{\mu }}((n-1)!)^{2}}\sum_{\sigma ,\rho \in
S(n-1)}\sum_{c,k}\psi _{j_{\mu }j_{\mu }}^{\mu }(\rho ^{-1})\psi _{j_{\nu
}j_{\nu }}^{\nu }(\sigma ^{-1})\delta _{c\rho (q)}\delta _{c\sigma
(q)}\varphi _{kr}^{\alpha }[(cn-1)\rho (qn-1) ]\varphi _{kr}^{\alpha }[(cn-1)\sigma (qn-1)]=
\ee
\be
\frac{\dim \psi ^{\nu }\dim \psi ^{\mu }}{\sqrt{N_{j_{\nu }}^{\nu }}\sqrt{%
N_{j_{\mu }}^{\mu }}((n-1)!)^{2}}\sum_{\sigma ,\rho \in S(n-1)}\psi _{j_{\mu
}j_{\mu }}^{\mu }(\rho ^{-1})\psi _{j_{\nu }j_{\nu }}^{\nu }(\sigma
^{-1})\delta _{\rho ^{-1}\sigma (q)q}\varphi _{rr}^{\alpha }[\rho ^{-1}\sigma ].
\ee
Substituting $\gamma =\rho ^{-1}\sigma \in S(n-2)\subset S(n-1)$ (which
follows  from $\delta _{\rho ^{-1}\sigma (q)q})$ we get%
\be
\sum_{c,k}z(\alpha )_{kj_{\mu }}^{c\mu }z(\alpha )_{kj_{\nu }}^{c\nu }=\frac{%
\dim \psi ^{\nu }\dim \psi ^{\mu }}{\sqrt{N_{j_{\nu }}^{\nu }}\sqrt{%
N_{j_{\mu }}^{\mu }}((n-1)!)^{2}}\sum_{\rho \in S(n-1)\gamma \in
S(n-2)}\sum_{k_{\nu }}\psi _{j_{\mu }j_{\mu }}^{\mu }(\rho ^{-1})\psi
_{j_{\nu }k_{\nu }}^{\nu }(\rho )\psi _{k_{\nu }j_{\nu }}^{\nu }(\gamma
^{-1})\delta _{\gamma (q)q}\varphi _{rr}^{\alpha
}[\gamma ].
\ee
Now using the orthogonality relations for the irreducible representations $%
\psi ^{\nu }$ of $S(n-1)$ we obtain%
\be
\sum_{c,k}z(\alpha )_{kj_{\mu }}^{c\mu }z(\alpha )_{kj_{\nu }}^{c\nu }=\frac{%
\dim \psi ^{\nu }}{\sqrt{N_{j_{\nu }}^{\nu }}(n-1)!}\sum_{\gamma \in
S(n-2)}\delta ^{\mu \nu }\delta _{j_{\mu }j_{\nu }}\psi _{j_{\nu }j_{\nu
}}^{\nu }(\gamma ^{-1})\delta _{\gamma (q)q}\varphi _{rr}^{\alpha }[\gamma ]=\delta ^{\mu \nu }\delta _{j_{\mu }j_{\nu
}}.
\ee
The proof for the unitary case is similar.
\end{proof}

\begin{corollary}
\label{symm}
Suppose that all representations $\psi ^{\nu }$ of $S(n-1)$ and $\varphi
^{\alpha }$ of $S(n-2)$ are unitary (real orthogonal) then the
representation matrices (in  the reduced matrix basis $\{f_{j\nu }^{\nu
}:\nu \neq \theta \}$ of the ideal $U(\alpha ))$
\be
\label{eee}
[\mbV'_{\alpha}(an)]_{j_{\rho }j_{\nu }}^{\rho \nu
}=\sum_{k=1,..,\dim \varphi ^{\alpha }}\sqrt{\lambda _{\rho }}z^{+}(\alpha
)_{j_{\rho }k}^{\rho a}z(\alpha )_{kj_{\nu }}^{a\nu }\sqrt{\lambda _{\nu }}%
:\rho ,\nu \neq \theta ,
\ee
are hermitian (real symmetric). In the orthogonal case we have replace hermitian conjugation $\dagger$ in the equation~\eqref{eee} by normal transposition $\operatorname{T}$.
\end{corollary}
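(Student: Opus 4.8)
The plan is to exhibit $\mbV'_{\alpha}(an)$, for a fixed generator $(an)$, in the shape $D_{\alpha}\,C^{\dagger}C\,D_{\alpha}$ with $D_{\alpha}$ a \emph{real} diagonal matrix, from which hermiticity is immediate. Concretely, for the fixed coset index $a$ let $C=C(a,\alpha)$ be the rectangular matrix whose entry in row $k\in\{1,\dots,\dim\varphi^{\alpha}\}$ and column $(\nu,j_{\nu})$ (the reduced-basis index, restricted to $\nu\neq\theta$) is $z(\alpha)_{kj_{\nu}}^{a\nu}$; this is just the sub-block, labelled by $a$, of the matrix $Z(\alpha)=(z(\alpha)_{kj_{\nu}}^{a\nu})$ of Theorem~\ref{th30}. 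Let $D_{\alpha}=\diag\big(\sqrt{\lambda_{\nu}}:\nu\neq\theta\big)$. Then formula~\eqref{eee} says exactly that $[\mbV'_{\alpha}(an)]_{j_{\rho}j_{\nu}}^{\rho\nu}=\sqrt{\lambda_{\rho}}\,(C^{\dagger}C)_{(\rho,j_{\rho}),(\nu,j_{\nu})}\,\sqrt{\lambda_{\nu}}$, i.e. $\mbV'_{\alpha}(an)=D_{\alpha}\,C^{\dagger}C\,D_{\alpha}$.

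First I would record that $D_{\alpha}$ is real: the ideal $U(\alpha)$ occurs only for $\varphi^{\alpha}\in\opV_{d}[S(n-2)]$, and under that hypothesis Theorem~\ref{th30} guarantees that the eigenvalues $\lambda_{\nu}$ of $Q(\alpha)$ are non-negative, so each $\sqrt{\lambda_{\nu}}$ is a non-negative real number and $D_{\alpha}^{\dagger}=D_{\alpha}$. (The eigenvalue that may vanish, $\lambda_{\theta}$, is excluded in the reduced basis, but this plays no role.) Since $C^{\dagger}C$ is a Gram matrix it is hermitian, hence $\big(D_{\alpha}C^{\dagger}C D_{\alpha}\big)^{\dagger}=D_{\alpha}^{\dagger}(C^{\dagger}C)^{\dagger}D_{\alpha}^{\dagger}=D_{\alpha}C^{\dagger}C D_{\alpha}$, which is the asserted hermiticity of $\mbV'_{\alpha}(an)$. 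Alternatively one can read this off directly from~\eqref{eee}: substituting $z^{\dagger}(\alpha)_{j_{\rho}k}^{\rho a}=\overline{z(\alpha)_{kj_{\rho}}^{a\rho}}$ gives $[\mbV'_{\alpha}(an)]_{j_{\rho}j_{\nu}}^{\rho\nu}=\sqrt{\lambda_{\rho}}\sqrt{\lambda_{\nu}}\sum_{k}\overline{z(\alpha)_{kj_{\rho}}^{a\rho}}\,z(\alpha)_{kj_{\nu}}^{a\nu}$, and taking the complex conjugate together with the interchange $(\rho,j_{\rho})\leftrightarrow(\nu,j_{\nu})$ returns the same expression.

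For the real orthogonal case I would invoke Proposition~\ref{propaux}: if every $\psi^{\nu}$ of $S(n-1)$ and every $\varphi^{\alpha}$ of $S(n-2)$ is real orthogonal then $z(\alpha)$ is real, so $C$ is a real matrix, $C^{\dagger}C=C^{\operatorname{T}}C$ is a real symmetric Gram matrix, and with $D_{\alpha}$ real diagonal we get $\mbV'_{\alpha}(an)=D_{\alpha}C^{\operatorname{T}}C D_{\alpha}$ real symmetric, which is the statement with $\dagger$ replaced by $\operatorname{T}$. I do not anticipate any real obstacle here: hermiticity is a formal consequence of the $D\,C^{\dagger}C\,D$ factorisation, and the hypotheses enter only through (i) the reality of $\sqrt{\lambda_{\nu}}$, which needs $\varphi^{\alpha}\in\opV_{d}[S(n-2)]$, and (ii) the reality of $z(\alpha)$ in the orthogonal case, which is exactly Proposition~\ref{propaux}; the only mild care needed is the bookkeeping of the multi-indices $(a,k)$ versus $(\nu,j_{\nu})$ and the restriction $\nu\neq\theta$.
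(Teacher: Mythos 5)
Your argument is correct and is essentially the paper's own: the paper likewise reads off hermiticity by rewriting $z^{\dagger}(\alpha)_{j_{\rho}k}^{\rho a}$ as $\overline{z(\alpha)_{kj_{\rho}}^{a\rho}}$ and comparing the $(\rho,\nu)$ and $(\nu,\rho)$ matrix elements of \eqref{eee}, using the non-negativity of the $\lambda_{\nu}$ and (for the symmetric case) the reality of $z(\alpha)$ from Proposition~\ref{propaux}. Your packaging of the same computation as $\mbV'_{\alpha}(an)=D_{\alpha}C^{\dagger}CD_{\alpha}$ with $D_{\alpha}=\diag(\sqrt{\lambda_{\nu}})$ real and $C^{\dagger}C$ a Gram matrix is a clean restatement, and it correctly isolates that only the reality of the $\sqrt{\lambda_{\nu}}$ and of $z(\alpha)$ are needed, not orthogonality itself.
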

Indeed unitarity (orthogonality) of matrices $z(\alpha)^{a\nu}_{kj_{\nu}}$ from Proposition~\ref{propaux} allows us to write $z^{+}(\alpha)^{a\nu}_{kj_{\nu}}=z(\alpha)^{\nu a}_{j_{\nu}k}$. Now writing explicitly matrix elements for $[\mbV'_{\alpha}(an)]_{j_{\rho }j_{\nu }}^{\rho \nu
}$ and $[\mbV'_{\alpha}(an)]_{j_{\nu }j_{\rho }}^{\nu \rho}$ together with unitarity (orthogonality) properties from Proposition~\ref{propaux} we obtain statement of Corollary~\ref{symm}.

\subsection{Proofs of the theorems from the main text}
\begin{proof}[Proof of Lemma~\ref{FF}]
From the definition of a fidelity we can write
\be
\label{partialf}
F_{1k}=\<\psi_{1k}|\rho_{1k}|\psi_{1k}\>=\tr\left(\rho_{1k}|\psi_{1k}\>\<\psi_{1k}|\right)=\frac{1}{d}\tr\left(\rho_{1k}\opV'(1k)\right),
\ee
where $\frac{1}{d}\opV'(1k)=|\psi_{1k}\>\<\psi_{1k}|$, $\rho_{1k} = \tr_{\overline{1k}} \rho_{1\ldots n}$ and $\tr_{\overline{1k}}$ denote partial trace over all systems except $1$ and $k$.

Now we can use decomposition of which we mentioned in Eq.~\ref{decomp1} to represent $\opV(1k)$ and $\rho_{1\ldots n}$:
\be
\opV'(1k)=\bigoplus_{\alpha} \operatorname{\mbI}_{r(\alpha)} \ot \mbV'_{\alpha}(1k), \ \ \rho_{1\ldots n}=\bigoplus_{\alpha} \operatorname{\mbI}_{r(\alpha)} \ot \widetilde{\rho}^{\alpha} \label{Vr},
\ee
where $\alpha$ runs over all partitions of $n-2$.
Inserting~\eqref{Vr} into \eqref{partialf}, we have:
\be
\label{res}
\begin{split}
F_{1k}&=\frac{1}{d}\left[\left(\bigoplus_{\mu} \operatorname{\mbI}_{r(\mu)} \ot \widetilde{\rho}^{\mu}\right) \left(\bigoplus_{\alpha} \operatorname{\mbI}_{r(\alpha)} \ot \mbV'_{\alpha}(1k)\right)\right]=\frac{1}{d}\tr\left(\bigoplus_{\alpha} \operatorname{\mbI}_{r(\alpha)} \ot \widetilde{\rho}^{\alpha}\mbV'_{\alpha}(1k) \right)=\\
&=\frac{1}{d}\sum_{\alpha}\tr\left(\rho^{\alpha}\mbV'_{\alpha}(1k)\right)=\frac{1}{d}\sum_{\lambda}\tr\left(\rho^{\alpha}\mbV'_{\alpha}(k-1n)\right),
\end{split}
\ee
where the last equality follows from Eq.~\ref{mapping}.
Now, one can see that Eq.~\eqref{res} can be written as:
\be
\label{comb}
F_{1k}=\sum_{\alpha}F_{1k}^{\alpha},
\ee
where $F_{1k}^{\alpha}=\frac{1}{d}\sum_{\alpha}\tr\left(\rho^{\alpha}\mbV'_{\alpha}(k-1n)\right)$, $\rho^{\alpha} = d_{\alpha} \widetilde{\rho}^{\alpha}$ and $d_{\alpha}$ stands for the dimension of irrep labeled by partition $\alpha$.
\end{proof}

\begin{proof}[Proof of Fact~\ref{fact}]
Reader can prove this fact by direct calculations. Namely, one has to compute fidelity between state which is a product in $1|2\ldots n$ cut and maximally entangled state $|\psi_{1k}\>$:
\be
F_{1k}^{\mathcal{N}}=\frac{1}{d}\<\psi_{1k}|\tr_{\overline{1k}}\left(\text{\noindent\(\mathds{1}\)}_1 \ot \rho_{2\ldots n}\right)|\psi_{1k}\>=\frac{1}{d}\<\psi_{1k}|\text{\noindent\(\mathds{1}\)}_1 \ot \rho_{k}|\psi_{1k}\>=\frac{1}{d}\tr \rho_k=\frac{1}{d}.
\ee
%namely $F^{\mathcal{N}}_{1k}=\tr(\hat{\rho} \Phi^+_{1k})=1/d$, where $\Phi^+_{1k}=|\psi_{1k}\>\<\psi_{1k}|$.
\end{proof}

\begin{proof}[Proof of Theorem~\ref{thm:main}]
The proof is similar to that in~\cite{Cwiklinski2012-cloning}. Only difference is the fact that now the fidelities look like as in Eq. \eqref{ffv}.
\end{proof}

\begin{proof}[Proof of Lemma~\ref{real}]
The proof goes as in~\cite{Cwiklinski2012-cloning}. The only new thing in the proof is that matrices of irreps for transpositions $(in)$, where $1\leq i \leq n-1$ are symmetric (see  Appendix~\ref{aux}, Corollary~\ref{symm}).
\end{proof}

\subsection{Fidelity region for each irreducible space and some applications}
\label{appl}

In this section we provide some technical details regarding construction of admissible region of fidelities for $1 \rightarrow N$ $UQCM$. We focus here for clarity on the case when $N=3$, then we have two non-trivial irreps $\alpha_1=(2)$ and $\alpha_2=(1,1)$. We also restrict here to dimensions $d \geq 3$ to omit discussion about dimension of irrep $\alpha_2$, but of course construction in this situation is the same. For any $d \geq 3$ non-trivial irreps have the same dimension equal to three, thanks to this and Lemma~\ref{real} we can write an arbitrary pure state as $|\psi^{\alpha_i} \> =(
a_1, a_2,a_3)^{\operatorname{T}}$ and corresponding density matrix as
$\rho^{\alpha_i} =\begin{pmatrix}
a_1^2 & a_1a_2 & a_1a_3\\
a_1a_2 & a_2^2 & a_2a_3\\
a_1a_3&  a_2a_3 & a_3^2
\end{pmatrix}$, where $a_1^2+a_2^2+a_3^2=1$ and $i=1,2$. Now putting for example density matrix $\rho^{(2)}$  into equation~\ref{ffv} from Lemma~\ref{FF}, together with irreps $\mbV'_{(2)}(k \ n-1)$ from formula~\eqref{alpha1} we obtain following set of equations:
\be
\label{form1}
\begin{split}
F^{(2)}_{12}&=\frac{1}{18 d}\left(a_1^2 (d-1)-2 \sqrt{3} a_1 a_2 (d-1)+2 \sqrt{2} a_1
   a_3 \sqrt{d-1} \sqrt{d+2}+3 a_2^2 (d-1)-2 \sqrt{6} a_2 a_3
   \sqrt{d-1} \sqrt{d+2}+2 a_3^2 (d+2)\right),\\
F^{(2)}_{13}&=\frac{1}{18 d}\left(a_1^2 (d-1)+2 a_1 \left(\sqrt{3} a_2 (d-1)+\sqrt{2} a_3
   \sqrt{d-1} \sqrt{d+2}\right)+3 a_2^2 (d-1)+2 \sqrt{6} a_2 a_3
   \sqrt{d-1} \sqrt{d+2}+2 a_3^2 (d+2)\right),\\
F^{(2)}_{14}&=\frac{1}{9d}\left(2 a_1^2 (d-1)-2 \sqrt{2} a_1 a_3 \sqrt{d-1}
   \sqrt{d+2}+a_3^2 (d+2)\right).
\end{split}
\ee
The similar set of equations we can  also obtain  for partition $(1,1)$. Moreover we know that the fidelity from ideal $\mathcal{N}$ is always equal to $1/d$ (see Fact~\ref{fact}). In next step we use \textit{Mathematica} software to generate parametric plots of regions given by formulas of the form~\eqref{form1} together with normalization condition $a_1^2+a_2^2+a_3^2=1$. Thanks to this we get admissible range of fidelities in every irreducible space labeled by partition $\alpha_i$. Due to Theorem~\ref{thm:main} to obtain admissible region of fidelities we have to generate convex hull of allowed regions obtained for every irreducible representation $\alpha$. To do this we have used \textit{Mathematica} package \textit{ConvexHull3D}. One can see that to generate admissible regions for number of clones larger than $3$ we need higher-dimensional space to embed convex hull, so we can not represent our results in the  graphical form. There is still some way to omit this problem at least partially. Namely we can construct some projection which maps  convex hulls from $d-$dimensional space to $3-$dimensional space, but then of course we lose some information.

\addcontentsline{toc}{section}{\bf Bibliography}
\bibliographystyle{apsrev}
\bibliography{mag_PCbib}
\end{document}